\newcommand{\abs}[1]{\left\vert#1\right\vert}
\newcommand{\set}[1]{\left\{#1\right\}}
\newcommand{\comb}[2]{\left(\begin{array}{c}#1 \\ #2 \end{array} \right)}
\newcommand {\defined} {\stackrel{def} {=}}
\newcommand\restr[2]{{
  \left.\kern-\nulldelimiterspace 
  #1 
  \vphantom{\big|} 
  \right|_{#2} 
  }}
\newcommand{\commentfig}[1]{#1}
\newcommand{\runningtitle}[1]{\vspace{0.5ex}\noindent{{\textbf{\boldmath #1:~}}}}
\newcommand{\npc}{\textsc{NP}\textrm{-Complete}}
\newcommand{\nph}{\textsc{NP}\textrm{-Hard}}
\newcommand{\apxh}{\textsc{APX}\textrm{-Hard}}
\newcommand{\apx}{\textsc{APX}}
\newcommand{\ptas}{\textsc{PTAS}}
\newcommand{\nohyp}{\textsc{P}=\textsc{NP}}
\renewcommand{\root}[1]{root(#1)}
\newcommand{\inn}[2]{in_{#1}(#2)}
\newcommand{\parent}[2]{parent_{#1}(#2)}
\newcommand{\children}[2]{chld_{#1}(#2)}
\newcommand{\ca}{{\mathcal A}}
\newcommand{\cc}{{\mathcal C}}
\newcommand{\bigoh}{{\mathcal O}}
\newcommand{\pp}{{\mathcal P}}
\newcommand{\ff}{{\mathcal F}}
\newcommand{\cs}{{\mathcal S}}
\newcommand{\TT}{{\mathcal T}}
\newcommand{\traversals}{tr}
\newcommand{\tc}{{tc}}
\newcommand{\rc}{{rc}}
\newcommand{\co}{{cc}}
\newcommand{\lightksubgraph}{\textsc{k-Lightest Subgraph}}
\newcommand{\minrc}{\textsc{MinRCEC}}
\newcommand{\mincc}{\textsc{MinCCEC}}
\newcommand{\minrcpt}{\textsc{MinRCPTEC}}
\newcommand{\mincca}{\textsc{MinCCAEC}}
\newcommand{\minccainst}{\textsc{$(G,r,X, tc)$}}
\newcommand{\minsc}{\textsc{MinSC}}
\newcommand{\minthreesctwo}{\textsc{Min3SC2}}
\newcommand {\dsum} {\displaystyle \sum}
\newcommand {\instrc} {(\mathcal{P},X,\tc)}
\newcommand {\instrcpt} {(G,r,X,\tc)}
\begin{document}

\title{Edge Coloring with Minimum Reload/Changeover Costs}

\author{Didem Gözüpek \inst{1} \and  Mordechai Shalom \inst{2,3}}

\institute{Department of Computer Engineering, Gebze Technical University, Kocaeli, Turkey. \email{didem.gozupek@gtu.edu.tr}
\thanks{This work is supported by the Scientific and Technological Research Council of Turkey (TUBITAK) under grant no. 113E567.}
\and
TelHai Academic College, Upper Galilee, 12210, Israel. \email{cmshalom@telhai.ac.il}
\and
Department of Industrial Engineering, Boğaziçi University, Istanbul, Turkey.
\thanks{Supported in part by the TUBITAK 2221 Programme.}
}

\pagestyle{headings}
\maketitle

\begin{abstract}
In an edge-colored graph, a traversal cost occurs at a vertex along a path when consecutive edges with different colors are traversed. The value of the traversal cost depends only on the colors of the traversed edges. This concept leads to two global cost measures, namely the \emph{reload cost} and the \emph{changeover cost}, that have been studied in the literature and have various applications in telecommunications, transportation networks, and energy distribution networks. Previous work focused on problems with an edge-colored graph being part of the input. In this paper, we formulate and focus on two pairs of problems that aim to find an edge coloring of a graph so as to minimize the reload and changeover costs. The first pair of problems aims to find a proper edge coloring so that the reload/changeover cost of a set of paths is minimized. The second pair of problems aim to find a proper edge coloring and a spanning tree so that the reload/changeover cost is minimized. We present several hardness results as well as polynomial-time solvable special cases.

\end{abstract}

\noindent\textbf{Keywords:}
Changeover cost, reload cost, approximation algorithms, edge coloring, network design, network optimization.

\section{Introduction}\label{sec:Intro}
\subsection{Background}
The cost incurred while traversing a vertex via two consecutive edges of different colors is called \emph{traversal cost} and it depends only on the colors of the traversed edges. This cost leads to two different global cost measures that appeared in the literature under the names of \emph{reload cost} and \emph{changeover cost}.

The reload cost concept is defined in \cite{wirth2001reload} and it received attention only recently, although it has numerous applications. For instance, a color may represent the mode of transportation in an intermodal cargo transportation network. The traversal cost corresponds to the cost of transferring cargo from one carrier to another. Another application is in energy distribution networks, where the energy transfer from one carrier to another one, such as the conversion of natural gas from liquid to gas state, results in loss of energy. In telecommunications, traversal costs arise in numerous settings. For instance, routing in a heterogeneous network requires switching among different technologies such as cables, fibers, and satellite links. This switching cost can be modeled by traversal costs. Even within the same technology, switching between different providers, for instance switching between different commercial satellite providers in satellite networks, leads to a switching cost. All applications hitherto mentioned can be modeled using traversal costs where an edge-colored graph is given as input, and this is the focus of the works in the literature, e.g. \cite{wirth2001reload, galbiati2008complexity, GGM14mincyclecover, gourves2010dmtcs, AGM11minreloadpathstoursflows, gamvros2012reload, GGM11minchangeover, GSVZ14}. However, problems of finding a proper edge coloring so as to minimize the reload (or changeover) cost have important applications as well. For instance, recently, cognitive radio networks (CRN) have gained increasing attention in the communication networks research community. Unlike other wireless technologies, CRNs are envisioned to operate in a wide range of frequencies. Therefore, switching from one frequency band to another frequency band in a CRN has a significant cost in terms of delay and power consumption \cite{gozupek2013spectrum}. An optimal allocation of frequencies to the wireless links in CRNs so that the switching cost is minimized, corresponds to a proper edge coloring minimizing the traversal cost. In this work we focus on this type of problems.

The reload cost refers to the sum of the traversal costs of a set of paths in a graph, whereas the changeover cost does not depend on the amount of commodity (or number of paths) traversing a vertex. The works \cite{wirth2001reload,galbiati2008complexity} consider the problem of finding a spanning tree having minimum diameter with respect to reload cost. The paper \cite{GGM14mincyclecover} considers the \emph{minimum reload cost cycle cover} problem, which is to find a set of vertex disjoint cycles spanning all vertices, with minimum reload cost. In \cite{gourves2010dmtcs}, the authors study the problem of finding a path, trail, or walk of minimum reload cost between two given vertices. They consider (a)symmetric reload costs and reload costs with(out) triangle inequality.
The paper \cite{gamvros2012reload} studies the problem of finding a spanning tree that minimizes the sum of reload costs over the paths
between \emph{all} pairs of vertices. The paper \cite{AGM11minreloadpathstoursflows} presents various path, tour, and flow problems related to reload costs.
One of the problems studied in that work is the \emph{minimum reload cost path-tree}, which
is to find a spanning tree that minimizes the reload cost from a given root vertex to all other vertices.
The paper \cite{GGM11minchangeover} studies a closely related, yet different problem, called
\emph{minimum changeover cost arborescence} in which the goal is to find a spanning tree that minimizes the changeover cost from a given root vertex to all other vertices.
The work in \cite{GSVZ14} considers the same problem and derives several inapproximability results, as well as polynomial-time algorithms for some special cases. In \cite{GSSZ14-ChangeoverTreewidth} these special cases are extended to show that the problem is polynomial-time solvable in bounded treewidth graphs.

\subsection{Our Contribution and Research Directions}
In this paper, we consider a different problem and focus on proper edge coloring of a given graph such that the reload/changeover cost is minimized. To the best of our knowledge, this paper is the first study of this family of problems. Specifically, we formulate two pairs of problems. In the first pair of problems, given a set of paths comprising a graph, the goal is to find a proper edge coloring of the graph so that the reload (resp. changeover) cost is minimized. In the second pair of problems, given a graph and a root vertex, the goal is to find proper edge coloring and a spanning tree of the graph so that the reload (resp. changeover) cost from the root vertex to all other vertices is minimized. We present several hardness results as well as polynomial-time solvable special cases of these problems.

Specifically, we show that the first pair of problems are hard to approximate in general, within any polynomial-time computable function of the input length, and this result is valid when the traversal costs are unbounded. When the traversal costs are bounded, we prove that the problems remain $\nph$ even when the underlying network is a star. However, we prove that this pair of problems are polynomial-time solvable in trees when the degrees of the vertices are bounded.

We then prove that the second pair of problems are $\apxh$ under bounded traversal costs in directed graphs. On the positive side, for these problems, we present a polynomial-time algorithm on trees (where both the degrees, and traversal costs are unbounded). We extend this algorithm to graphs where the difference between the number of edges and the number of vertices is constant, i.e. graphs that are in some sense close to trees. However, this extension does not cover cactus graphs, which have treewidth $2$. To solve the problem for other special graph classes such as cactus graphs or bounded treewidth graphs is one possible research direction. Another interesting research direction is to analyze these problems from the perspective of parameterized complexity where a few important and practical parameters are the treewidth of the graph, the number of colors, and the ratio of the largest traversal cost to the smallest non-zero traversal cost.

\section{Preliminaries}
\label{sec:Preliminaries}
\runningtitle{Graphs, trees}
Given an undirected graph $G=(V(G), E(G))$ and a vertex $v \in V(G)$, $\delta_G(v)$ denotes the set of edges incident to $v$ in $G$, and $d_G(v) \defined \abs{\delta_G(v)}$ is the degree of $v$ in $G$. We denote a pair of vertices $u,v \in V(G)$ as $uv$, i.e. $uv \in E(G)$ if $u$ and $v$ are adjacent in $G$. The minimum and maximum degrees of $G$ are defined as $\delta(G)\defined \min \set{d_G(v) | v \in V(G)}$ and $\Delta(G)\defined \max \set{d_G(v) : v \in V(G)}$. Given a tree $T$ and two vertices $v_1,v_2 \in V(T)$, we denote by $P_T(v_1,v_2)$ the path between $v_1$ and $v_2$ in $T$. We denote a bipartite graph as a triple $(V_1, V_2, E)$ where $\set{V_1,V_2}$ is the bipartition of its vertices and $E$ is its edge set. Given two graphs $G$ and $G'$, their union is $G \cup G' \defined (V(G) \cup V(G'), E(G) \cup E(G'))$.

The numbers of the inbound and outbound arcs of a vertex in a digraph are called its in-degree and out-degree, respectively. We denote the ordered pair $(u,v)$ of vertices of $G$ as $uv$, i.e. $uv \in E(G)$ if there is an arc from $u$ to $v$ in $G$. A digraph is a \emph{rooted tree} or \emph{arborescence} if its underlying graph is a tree and it contains a unique \emph{root} vertex denoted by $\root{T}$ which has a directed path to every other vertex of $T$. Each vertex $v \neq \root{T}$ has in-degree $1$. In this case we denote by $\inn{T}{v}$ the unique inbound arc of $v$ in $T$, and by $\parent{T}{v}$ the other endpoint of $\inn{T}{v}$. The set $\children{T}{v}$ is the set of all vertices $u$ of $T$ such that $\parent{T}{u}=v$. We also denote by $P_T(u,v)$ the unique path between vertices $u$ and $v$ in the tree $T$.

\runningtitle{Reload and changeover costs}
We consider proper edge colorings $\chi:E(G) \rightarrow X$ of a given graph $G$ where the colors are taken from a set $X$ and edges incident to the same vertex are assigned different colors. Without loss of generality we assume $X = \set{1,2,\ldots,\abs{X}}$. Since, by Vizing's theorem, every graph is $\Delta(G)+1$ edge colorable, we assume that $\abs{X} \geq \Delta(G)+1$ so that $G$ is edge-colorable with colors from $X$.

The traversal costs are given by a nonnegative function $\tc:X^2 \rightarrow \mathbb{R}^{+} \cup \set{0}$ satisfying
\begin{enumerate}[i)]
\item $\tc(i,j)=\tc(j,i)$ for every $i,j \in X$.
\item $\tc(i,i)=0$ for every $i \in X$.
\end{enumerate}

We denote as $\tc_{max}$ the maximum ratio between two positive traversal costs, i.e., $\tc_{max} \defined \frac{\max \set{\tc_{i,j}|i,j \in X}}{\min \set{\tc_{i,j}|i,j \in X, \tc_{i.j} > 0}}$.
Let $P=(e_1,e_2,\ldots,e_\ell)$ be a path of length $\ell$ of $G$. We denote by $\traversals(P)=\set{(e_i,e_{i+1}): 1 \leq i < \ell}$ the set of traversals of $P$. The traversal cost associated with a traversal $t_i=(e_i, e_{i+1})$ of $P$ with coloring $\chi$ is $\tc_\chi(t_i) \defined \tc(\chi(e_i),\chi(e_{i+1}))$. The traversal cost associated with $P$ is $\tc_\chi(P) \defined \sum_{t \in \traversals(P)} \tc_\chi(t)$. Note that $\tc_\chi(P)=0$ whenever the length of $P$ is zero or one, since in these cases $\traversals(P)=\emptyset$. Therefore, we assume that all paths under consideration have length at least $2$.

Let $\pp$ be a set of paths. The set of traversals of $\pp$ is $\traversals(\pp) \defined \bigcup_{P \in \pp} \traversals(P)$. The \emph{reload cost} of a set of paths $\pp$ with coloring $\chi$ is
\[
\rc_\chi(\pp) \defined \dsum_{P \in \pp} \tc_\chi(P) = \dsum_{P \in \pp} \dsum_{t \in \traversals(P)} \tc_\chi(t),
\]
and its \emph{changeover cost} is
\[
\co_\chi(\pp) \defined \dsum_{t \in \traversals(\pp)} \tc_\chi(t).
\]

Note that the difference between $\rc_\chi(\pp)$ and $\co_\chi(\pp)$ is that if a traversal occurs more than once, it contributes to $\co_\chi(\pp)$ only once, whereas every occurrence contributes to $\rc_\chi(\pp)$.

\runningtitle{Problem Statement}
We assume without loss of generality that $E(G) = \cup_{P \in \pp} E(P)$, i.e. every edge of $G$ is used by at least one path. We note that whenever every traversal is in at most one path of $\pp$, we have $\rc_\chi(\pp)=\co_\chi(\pp)$. Observe that, in particular, this holds when $\pp$ is a set of \emph{distinct} paths with length $2$. This simple fact will be useful throughout this work.

The minimum reload (resp. changeover) cost edge coloring ($\minrc$ resp. $\mincc$) problem aims to find a proper edge coloring of $G$ leading to a minimum reload (resp. changeover) cost with respect to $\pp$.  Formally,\\

\noindent \fbox{\begin{minipage}{\textwidth}
$\minrc/\mincc$ $\instrc$ \\
{\bf Input:} A set of paths $\pp$ comprising a graph $G=\cup \pp$, a set $X$ of at least\\
$\Delta(G)+1$ colors, a traversal cost function $\tc:X^2 \to \mathbb{R}^+ \cup \set{0}$\\
{\bf Output:} A proper edge coloring $\chi:E(G) \rightarrow X$\\
{\bf Objective:} Minimize $\rc_\chi(\pp)/\co_\chi(\pp)$.
\end{minipage}}

Given a tree $T$ and a vertex $r \in V(T)$, let $\pp(T,r) \defined \set{P_T(r,v):v \in V(T) \setminus \set{r}}$ be the set of all paths between the root vertex $r$ and all other vertices. The reload and changeover costs of $T$ rooted at $r$ are $\rc_\chi(T,r) \defined \rc_\chi(\pp(T,r))$ and $\co_\chi(T,r) \defined \co_\chi(\pp(T,r))$, respectively. Given a graph $G$ and a vertex $r$ of $G$, the minimum reload cost path tree edge coloring ($\minrcpt$) and minimum changeover cost arborescence edge coloring ($\mincca$) problems aim to find a proper edge coloring of $G$ and a spanning tree $T$ rooted at $r$ with minimum reload and changeover cost, respectively. Formally,
~\\
\noindent \fbox{\begin{minipage}{\textwidth}
$\minrcpt/\mincca$ $\instrcpt$ \\
{\bf Input:} A graph $G$, a vertex $r$ of $G$, a set $X$ of at least $\Delta(G)+1$ colors, a traversal cost function $\tc:X^2 \to \mathbb{R}^+ \cup \set{0}$\\
{\bf Output:} A proper edge coloring $\chi:E(G) \to X$ and a spanning tree $T$ of $G$ rooted at $r$\\
{\bf Objective:} Minimize $\rc_\chi(T,r)/\co_\chi(T,r)$.
\end{minipage}}
~\\

\runningtitle{Approximation Algorithms, Reductions} Let $\Pi$ be a minimization
problem and $\rho \geq 1$. A (feasible) solution $S$ of an instance
$I$ of $\Pi$ is a \emph{$\rho$-approximation} if the objective
function value of $S$ is at most $\rho$ times the optimum. A polynomial-time algorithm $ALG$ is
a \emph{$\rho$-approximation algorithm} for $\Pi$ if $ALG$ returns a $\rho$-approximation $ALG(I)$
for every instance $I$ of $\Pi$. A \emph{polynomial time approximation scheme} ($\ptas$) for $\Pi$ is an infinite family of algorithms $\set{ALG_\epsilon | \epsilon > 0}$ such that $ALG_\epsilon$ is a $(1+\epsilon)$-approximation algorithm with running time $O(\abs{I}^{h(\epsilon)})$ for some function $h$. $\ptas$ also denotes the class of problems that admit a $\ptas$. $\apx$ is the class of problems that admit a $c$-approximation for some constant $c$.

Given two optimization problems $\Pi$ and $\Pi'$ with objective functions $c_\Pi$ and $c_{\Pi'}$ respectively, an \emph{$L$-reduction} from $\Pi$ to $\Pi'$ consists of two polynomial-time computable functions $f, g$ such that a) $f$ transforms every instance $I$ of $\Pi$ to an instance $f(I)$ of $\Pi'$, b) $g$ transforms every solution $s'$ of $f(I)$ to a solution $g(s')$ of $I$, c) $\abs{OPT_{\Pi'}(f(I))} \leq \rho \cdot \abs{OPT_{\Pi}(I)}$, and $\abs{OPT_\Pi(I)-c_\Pi(g(s'))} \leq \rho' \cdot \abs{OPT_{\Pi'}(f(I)) - c_{\Pi'}(s')}$ for two constants $\rho, \rho'$. A problem is in $\apxh$ if every problem in $\apx$ can be reduced to it by a an $L$-reduction. If a problem is in $\apxh$ then it does not admit a $\ptas$ unless $\nohyp$. In this work, we use a different type of reduction that we will term \emph{$LT$-reduction} or a \emph{Turing type $L$-reduction}. An $LT$-reduction from $\Pi$ to $\Pi'$ is a polynomial-time computable sequence of pairs of functions such that at least one of them is an $L$-reduction from $\Pi$ to $\Pi'$. Clearly, by returning the best solution implied by the individual reductions, one can get a constant approximation to $\Pi$.

\runningtitle{Biconnected Components and Block Trees} A \emph{cut vertex} (\emph{articulation point} or \emph{separation vertex}) of a connected graph is a vertex whose removal (along with its incident edges) disconnects the graph. A graph with no articulation points is \emph{biconnected}. A maximal biconnected induced subgraph of a graph is called a \emph{biconnected component} or a \emph{block} \cite{bondymurty2008graphtheory}. Any connected graph $G$ can be decomposed in linear time into a tree whose vertices are the biconnected components of $G$ and its articulation points. This tree is termed the \emph{block tree} (or \emph{superstructure}) of $G$. The edges of the block tree join every cut vertex to the blocks it belongs to \cite{E79} and every block to the cut vertices (of $G$) contained in it.

\runningtitle{Matchings} A matching of a graph $G$ is a subset $M \subseteq E(G)$ of pairwise nonadjacent edges. A matching is perfect if $V(M)=V(G)$. In an edge-weighted graph, minimum weight perfect matching is a perfect matching with minimum total edge weight and it can be computed in polynomial time.

\runningtitle{The $\lightksubgraph$ Problem} The $\lightksubgraph$ problem is to find an induced subgraph $H$ on $k$ vertices, of a given edge-weighted graph $G$, with minimum total edge weight. This problem is $\nph$ in the strong sense even when the graph is a complete graph and the edge weights are either $1$ or $2$ \cite{watrigant2012k}.

\runningtitle{The Minimum Set Cover Problem ($\minsc$)} An instance of this problem is a set system $\cs=\set{S_1, S_2, \ldots, S_m}$, with $U \defined \cup \cs$. Given such an instance, one has to find a subset $\cc \subseteq \cs$ that covers $U$, i.e. $\cup \cc = U$, such that $\abs{\cc}$ is minimum. The special case in which $\forall i, |S_i| \leq k$, and $\forall u \in U, \abs{\set{S_i \in \cs: u \in S_i}} \leq \ell$, for two constants $k,\ell>0$ is called the minimum $(k,\ell)$-set cover problem. The minimum $(3,2)$-set cover problem ($\minthreesctwo$) is $\apxh$ \cite{duh1997approximation}, i.e. it does not admit a $\ptas$ unless $\nohyp$.

\section{Hardness Results}\label{sec:Herdness}
In this section we show that $\mincc$ and $\minrc$ are inapproximable when the ratio $\kappa_{\tc}$ of the biggest traversal cost to the smallest non-zero traversal cost is unbounded. Then, we show that both problems remain $\nph$ in the strong sense even when $\kappa_{\tc}=2$ and $G$ is a star. We then return to the $\mincca$ and $\minrcpt$ problems and show that they are $\apxh$ in directed graphs even when $\kappa_{\tc}=2$.

\begin{theorem}
$\mincc$ and $\minrc$ are inapproximable within any polynomial-time computable function $f(\abs{\pp})$.
\end{theorem}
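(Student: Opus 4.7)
The plan is to reduce from the NP-hard problem of deciding whether a cubic graph is $3$-edge-colorable (Holyer's theorem). Given a cubic graph $G$, I construct an instance of $\minrc$ and $\mincc$ as follows: let $X = \set{1,2,3,4}$, let $\pp$ consist of all distinct paths of length two in $G$, and define the traversal cost function by $\tc(i,4) = \tc(4,i) = 1$ for $i \in \set{1,2,3}$ and $\tc(i,j) = 0$ otherwise. Since $\pp$ is a set of distinct paths of length two, the remark preceding the problem statement gives $\rc_\chi(\pp) = \co_\chi(\pp)$ for every proper coloring $\chi$, so both problems are handled simultaneously by the same reduction.

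The central claim to establish is that the optimum value of the constructed instance is zero if and only if $G$ admits a proper $3$-edge-coloring. The easy direction: a proper edge coloring using only colors in $\set{1,2,3}$ makes every traversal have cost $0$. For the converse, if $G$ is not $3$-edge-colorable then any proper coloring $\chi : E(G) \to X$ must assign color $4$ to some edge $e$; since $G$ is cubic, each endpoint of $e$ has two other incident edges, all of which must receive colors in $\set{1,2,3}$ by properness. Consequently there exists a length-two path $(e,e') \in \pp$ with $\chi(e)=4$ and $\chi(e') \in \set{1,2,3}$, contributing a cost of $1$; the optimum is therefore at least $1$.

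Given this equivalence, the inapproximability within any polynomial-time computable $f(\abs{\pp})$ follows almost for free: if $A$ were such an approximation algorithm, then on a YES instance the optimum is zero, so $A$ must output a value at most $f(\abs{\pp}) \cdot 0 = 0$, i.e.\ zero; on a NO instance $A$ would output a value at least the optimum, which is at least $1$. Running $A$ and checking whether its output equals zero would therefore decide Holyer's problem in polynomial time (note that $\abs{\pp} = \bigoh(\abs{V(G)})$ for cubic $G$, so $f(\abs{\pp})$ is computable within the time budget), contradicting $\nohyp$. The main point to get right is the if-and-only-if correspondence; once that is verified, the fact that approximating zero forces the answer to be exactly zero trivializes the strength ``within any $f$'' of the inapproximability statement, so no gap-amplification of the traversal costs is actually necessary.
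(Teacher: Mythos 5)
Your proof is correct under the paper's definition of a $\rho$-approximation, but it takes a genuinely different route from the paper's. Both arguments start from Holyer's hardness of edge coloring and use the same path set $\pp$ (all distinct length-two paths, so $\rc_\chi(\pp)=\co_\chi(\pp)$), but the cost gadgets differ. You zero out all traversal costs among the colors $\set{1,2,3}$ and charge $1$ only for traversals involving the fourth color, so that the optimum is $0$ exactly when the cubic graph is $3$-edge-colorable; inapproximability then follows from the degeneracy that any multiplicative guarantee forces output value $0$ on zero-optimum instances. The paper instead keeps unit costs between all distinct colors among the first $\Delta(G)$ and assigns the huge penalty $M=\abs{\pp}\cdot f(\abs{\pp})$ to traversals using color $\Delta(G)+1$, producing a gap between two \emph{positive} values ($\abs{\pp}$ versus more than $\abs{\pp}\cdot f(\abs{\pp})$). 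Your version is simpler: the cost function does not depend on $f$, you never need to evaluate $f$ (so polynomial-time computability of $f$ is not even used), and the costs lie in $\set{0,1}$. What the paper's version buys is robustness: its hardness does not hinge on the zero-optimum pathology, so it survives definitions of approximation that only constrain instances with positive optimum (or allow additive slack), and it shows that the hardness persists when every traversal between distinct colors has strictly positive cost --- which is exactly the regime the paper emphasizes in Section 3, where inapproximability is attributed to the unboundedness of the ratio $\kappa_{\tc}$ of largest to smallest nonzero cost (in your construction $\kappa_{\tc}=1$, and the hardness comes entirely from distinguishing zero from nonzero optimum). One small point worth stating explicitly in your write-up: the conclusion is conditional on $\textsc{P}\neq\textsc{NP}$, as in the paper.
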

\begin{proof}
The proof is by reduction from the chromatic index problem. The chromatic index of a graph $G$ is either $\Delta(G)$ or $\Delta(G)+1$. However, it is $\npc$ to decide between these two values \cite{holyer1981np}. Given a graph $G$ we construct an instance $I=\instrc$ where $\pp$ consists of all distinct paths of length $2$ of $G$, $\abs{X}=\Delta(G)+1$, and
\[
\tc(i,j)=\left\{
\begin{array}{ll}
0 & \textrm{if~}i=j\\
1 & \textrm{if~}i \neq j \textrm{~and~} i,j \leq \Delta(G)\\
M & \textrm{otherwise}
\end{array}
\right.
\]

\newcommand{\upperb}{\abs{\pp} \cdot f(\abs{\pp})}

\noindent where $M=\upperb$. We recall that since the paths of $\pp$ are of length $2$, we have $\rc_\chi(\pp)=\co_\chi(\pp)$ for every coloring $\chi$. Assume, by way of contradiction, that there exists an $f(\abs{\pp})$-approximation algorithm $\ca$ for one of the problems. Then $\ca$ is an $f(\abs{\pp})$-approximation algorithm for both problems. If the chromatic index of $G$ is $\Delta(G)$, let $\chi$ be a proper edge coloring of $G$ using the first $\Delta(G)$ colors. Then all traversal costs are $1$, and $\rc_\chi(\pp)=\co_\chi(\pp)=\abs{\pp}$; therefore, the solution has value at most $\upperb$. On the other hand, if the chromatic index of $G$ is $\Delta(G)+1$, then any edge coloring $\chi'$ uses $\Delta(G)+1$ colors, and we have $\rc_{\chi'}(\pp)=\co_{\chi'}(\pp) \geq \abs{\pp}+M-1=\abs{\pp}+\upperb-1>\upperb$ since there is at least one traversal with cost $M$. Therefore, $G$ is $\Delta(G)$ edge-colorable if and only if $\ca$ returns a solution with cost at most $\upperb$.
\qed
\end{proof}

We now show that both problems are $\nph$ in the strong sense even in very simple graphs that are in particular $\Delta(G)$- edge-colorable, namely stars, and have $\kappa_{\tc}=2$.
\begin{theorem}\label{thm:HardnessStarRhoTwo}
$\mincc$ and $\minrc$ are $\nph$ in the strong sense even when $\tc(i, j) \in \set{0,1,2}$ for every pair $i,j \in X$ and $G$ is a star.
\end{theorem}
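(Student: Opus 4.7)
The plan is to reduce from the $\lightksubgraph$ problem restricted to complete graphs with edge weights in $\set{1,2}$, which is stated in the preliminaries to be $\nph$ in the strong sense.

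More concretely, given an instance $(H,w,k)$ of $\lightksubgraph$ with $V(H)=\set{1,\ldots,m}$ and $w:E(H)\to\set{1,2}$, I would construct an instance $(\pp,X,\tc)$ of $\mincc$ (equivalently, of $\minrc$) as follows. Take $G$ to be the star $K_{1,k}$ with edges $e_1,\ldots,e_k$ incident to the center, set $X=V(H)$ (after, if necessary, padding $V(H)$ with one extra isolated color so that $\abs{X}\geq k+1=\Delta(G)+1$; we may assume $m\geq k+1$ in the hard instances, since $m=k$ gives a trivial instance of $\lightksubgraph$), and define $\tc(i,j)=w(ij)$ for $i\neq j$ and $\tc(i,i)=0$. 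Let $\pp$ be the set of all $\binom{k}{2}$ distinct length-$2$ paths of $G$, so that by the observation in the Preliminaries $\rc_\chi(\pp)=\co_\chi(\pp)$ for every coloring $\chi$.

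The key step is to observe that any proper edge coloring $\chi$ of $G$ must assign distinct colors to $e_1,\ldots,e_k$, i.e.\ $\chi$ is an injection $E(G)\to X$, and
\[
\rc_\chi(\pp)=\co_\chi(\pp)=\sum_{1\leq i<j\leq k}\tc(\chi(e_i),\chi(e_j))=\sum_{1\leq i<j\leq k}w(\chi(e_i)\chi(e_j)),
\]
which is exactly the total weight of the subgraph of $H$ induced by the $k$-vertex set $\chi(E(G))\subseteq V(H)$. Conversely, any $k$-vertex subset $S\subseteq V(H)$ corresponds to a proper edge coloring by assigning the vertices of $S$ bijectively to the edges of the star. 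Hence optimal solutions of the two problems are in one-to-one correspondence with equal objective values, and since the construction is polynomial in $m$ and $k$, strong $\nph$ness of $\lightksubgraph$ transfers to $\mincc$ and $\minrc$. Note that $\tc$ takes values in $\set{0,1,2}$, as required.

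The only mild obstacle is the degenerate situation $m=k$ (or $m<k$), which is trivial for $\lightksubgraph$ and can simply be excluded from the reduction; handling the color-budget constraint $\abs{X}\geq \Delta(G)+1$ via the extra isolated color (whose cost can be set, say, to $2$ against every other color) never affects any optimum since such a color is never useful when at least one color outside $\chi(E(G))$ is already available. No other subtleties arise.
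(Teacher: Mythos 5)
Your proposal is correct and takes essentially the same route as the paper: a reduction from $\lightksubgraph$ on edge-weighted complete graphs to a star with all $\binom{k}{2}$ length-$2$ leaf-to-leaf paths, identifying colors with vertices of $H$ and traversal costs with edge weights, so that the cost of a proper coloring equals the weight of the induced subgraph on the $k$ chosen colors. Your extra care about the color budget $\abs{X}\geq\Delta(G)+1$ and the degenerate case $m\leq k$ only makes explicit what the paper handles implicitly by assuming $K$ has more than $k$ vertices.
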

\begin{proof}
The proof is by reduction from the $\lightksubgraph$ problem, which is $\nph$ in the strong sense even on complete graphs with edge weights either $1$ or $2$ \cite{watrigant2012k}. Given such an instance $(K,w)$ of~\lightksubgraph~where $K$ is a complete graph on more than $k$ vertices and $w$ is the edge weight function such that $w_{ij}$ is the weight of the edge between vertices $i$ and $j$, we build the following instance: $G$ is a star on $k+1$ vertices ($k$ leaves), $\pp$ consists of the $\comb{k}{2}$ paths between every pair of leaves of $G$, $\abs{X}=\abs{K}$, and $\tc(i,j)=w_{ij}$. Since all paths have length $2$, we have $\rc_\chi(\pp)=\co_\chi(\pp)$ for every coloring $\chi$. Moreover, $\rc_\chi(\pp)$ is equal to the total edge weight of a clique on $k$ vertices of $K$ (corresponding to the set of $k$ colors of $X$ used in $\chi$).
\qed
\end{proof}

\begin{figure} [htbp]
\begin{center}
\commentfig{
\includegraphics [width=\textwidth]{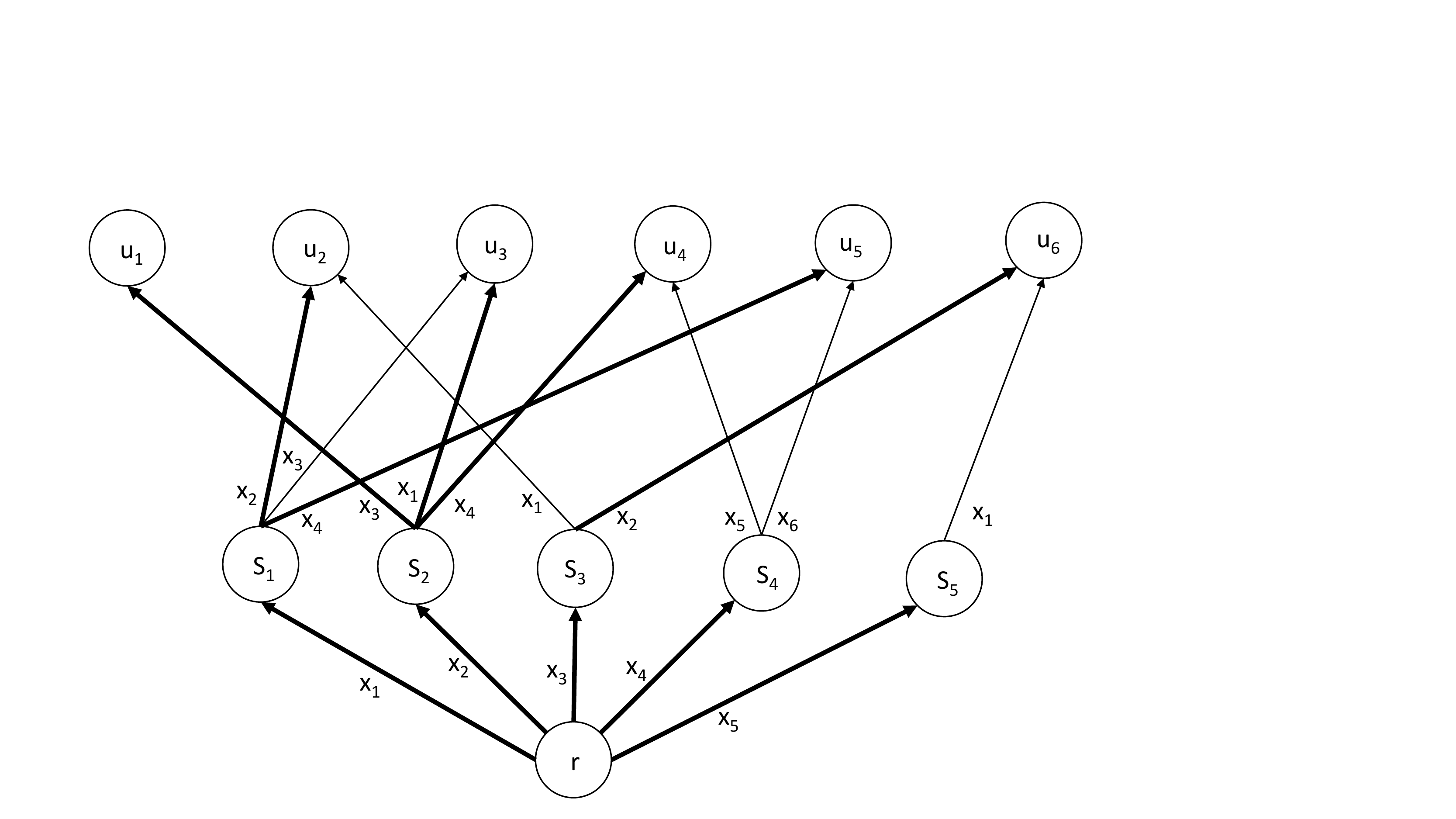}
}
\caption{The directed acyclic graph $G$ corresponding to an instance of $\mincca$ with $S_{1}=\{u_{2},u_{3}, u_{5}\}$, $S_{2}=\{u_{1}, u_{3}, u_{4}\}$, $S_{3}=\{u_{2}, u_{6}\}$, $S_{4}=\{u_{4}, u_{5}\}$, $S_{5}=\{u_{6}\}$, $X_{c}=\{x_1, x_2, x_3, x_4\}$, and $X_e=\{x_5, x_6\}$. Bold arcs indicate the spanning tree corresponding to a minimum set cover $\mathcal{C}^*=\{S_{1},S_{2}, S_{3} \}$.}
\label{fig:minccaecreduction}
\end{center}
\end{figure}

\begin{theorem}
$\mincca$ and $\minrcpt$ are $\apxh$ in directed graphs even when $\tc(i, j) \in \set{0,1,2}$ for every pair $i,j \in X$.
\end{theorem}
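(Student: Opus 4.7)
The plan is to establish both results via a single $L$-reduction from $\minthreesctwo$, which is $\apxh$ as recorded in the preliminaries. Given an instance $(\cs,U)$ with $\cs=\{S_1,\ldots,S_m\}$, $U=\{u_1,\ldots,u_n\}$, maximum set size $3$, and element frequency $2$, I would build a directed acyclic graph $G$ rooted at $r$ of the shape shown in Figure~\ref{fig:minccaecreduction}: a set vertex $s_j$ for each $S_j$ together with an arc $r\to s_j$, an element vertex $u_i$ for each $u_i$, and an arc $s_j\to u_i$ whenever $u_i\in S_j$. Because each $s_j$ has a unique in-neighbour, it is forced into every spanning arborescence of $G$; only the parent of each $u_i$ is a real choice, so any spanning arborescence $T$ canonically induces a cover $\mathcal C_T\defined\{S_j : s_j \text{ has a child in }T\}$ of $U$, and conversely every cover of $U$ is of the form $\mathcal C_T$ for some $T$.

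Next, I would design a color partition $X=X_c\cup X_e$ and a traversal-cost function $\tc:X^2\to\{0,1,2\}$ so that the changeover cost of any $T$ under its best proper coloring is of the form $\alpha\cdot|\mathcal C_T|+\beta$ for constants $\alpha,\beta$ depending only on the instance. Colors from $X_c$ would label the arcs $r\to s_j$ (pairwise distinct by properness at $r$), while colors from $X_e$ would label the arcs $s_j\to u_i$; properness at each $u_i$ reduces to using two distinct colors on its at-most-two incoming arcs, and properness at $s_j$ requires its up-to-three outgoing arcs to be distinctly colored. The entries of $\tc$ are chosen so that (i) at every $s_j$ that is used in $T$, any proper extension of the coloring charges exactly one unit of changeover cost against the inbound color $c_j$, and (ii) at every unused $s_j$ no traversal is made. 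Because every root-to-leaf path in $T$ has length two, each such path contributes at most one traversal, so $\rc_\chi(T,r)=\co_\chi(T,r)$, and the same argument simultaneously handles $\mincca$ and $\minrcpt$.

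With this in place the $L$-reduction is routine. Any cover $\mathcal C$ of size $k$ is realized by an arborescence that routes each $u_i$ through some $S_j\in\mathcal C$ containing it with the designated coloring, yielding changeover cost $\alpha k+\beta$; conversely, from any pair $(\chi,T)$ of changeover cost $c$ one extracts $\mathcal C_T$ of size at most $(c-\beta)/\alpha$, since each used $s_j$ contributes at least $\alpha$ units. This yields $OPT_{\mincca}(f(I))\le \alpha\cdot OPT_{\minthreesctwo}(I)+\beta$ and $\bigl|OPT_{\minthreesctwo}(I)-|\mathcal C_T|\bigr|\le (1/\alpha)\cdot\bigl|OPT_{\mincca}(f(I))-\co_\chi(T,r)\bigr|$, the two inequalities required for an $L$-reduction.

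The main obstacle is precisely the construction of the cost matrix. The $m$ inbound colors $c_j$ are forced pairwise distinct, yet we must ensure that every row $\tc(c_j,\cdot)$ contains both enough zeros so that a proper coloring exists which is realizable globally (subject to the element-side properness constraints) and a structure that forces at least one unit entry whenever $s_j$ has an outgoing arc in $T$. Packing this into values in $\{0,1,2\}$ with the modest color set suggested by the partition $(X_c,X_e)$ of Figure~\ref{fig:minccaecreduction} is the technical heart of the proof.
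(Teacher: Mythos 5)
Your graph, your source problem ($\minthreesctwo$), and the observation that $\rc_\chi(T,r)=\co_\chi(T,r)$ because every root-to-vertex path has length at most $2$ all coincide with the paper. But the step you explicitly defer --- the construction of the traversal-cost function --- is not a packaging detail; it is the whole proof, and the specific target you set for it is unattainable. You want the optimal changeover cost of a tree $T$ to be exactly $\alpha\abs{\cc_T}+\beta$ with each used set vertex charged ``exactly one unit'' independently of how many elements it serves. At a used set vertex $s_j$ with $c\in\set{1,2,3}$ tree-children and inbound color $c_j$, the traversals are one per child, and the solver (who, note, is free to put \emph{any} color on \emph{any} arc --- you cannot decree that the arcs $s_j\to u_i$ use only $X_e$) will give the children the cheapest distinct colors, so $s_j$ contributes essentially the sum of the $c$ smallest off-diagonal entries $a_1\le a_2\le\dots$ of row $c_j$. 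These partial sums are convex in $c$, so the implied per-set activation charge $a_1-a_2$ is non-positive: a genuine positive penalty per used set, uniform over $c$, would require $a_1>a_2$ (your ``exactly one unit'' version forces $a_1=1$, $a_2=0$), which is impossible. Hence no cost matrix in $\set{0,1,2}$ makes the optimum affine in $\abs{\cc_T}$ with a positive coefficient under this construction, and minimizing changeover cost would never penalize larger covers.

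The paper's mechanism is different and is exactly the missing idea: it does not make the cost track $\abs{\cc_T}$ directly, but budgets the cheap colors at the root. It sets $\abs{X_c}=k$ for a \emph{guessed} value $k$, with $\tc=1$ between distinct colors of $X_c$ and $2$ otherwise, so every element contributes at least $1$, and properness at $r$ lets at most $k$ set vertices receive an $X_c$-colored inbound arc; every used set beyond that pays an extra unit. For the correct guess $k=k^*=OPT(\cs)$ one gets $OPT(f_{k^*}(\cs))\le n\le 3\,OPT(\cs)$ (coloring the bipartite subgraph on $\set{r}\cup\cc^*\cup U$ with the $k^*$ colors of $X_c$) and $\abs{g_{k^*}(T,\chi)}-OPT(\cs)\le \co_\chi(T,r)-OPT(f_{k^*}(\cs))$, i.e.\ an $L$-reduction --- but only for that one value of $k$, which is unknown; this is why the paper uses an $LT$-reduction, running the construction for every $k\le m$ and arguing that at least one member of the family is an $L$-reduction, which still suffices to rule out a $\ptas$. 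To repair your proof you need both ingredients (the bounded budget of cheap inbound colors and the guessing of $k^*$, or some equivalent device); the single $L$-reduction with cost exactly affine in the cover size that your sketch promises cannot be delivered.
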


\begin{proof}
The proof is by $LT$-reduction from $\minthreesctwo$, which is known to be $\apxh$. We consider an instance $\cs$ of $\minthreesctwo$ with $n$ elements, $m \geq 4$ sets and with every set cover consisting of at least 4 sets. Otherwise, an optimal set cover can be found in polynomial time.
Given such an instance and an integer $k \leq m$, we construct an instance $f_k(\cs)=\minccainst$ as follows (see Figure \ref{fig:minccaecreduction}). $G=(V,E)$ is a directed acyclic graph with $V=\set{r} \cup \cs \cup U$ where $U = \cup \cs$, $E=E_1 \cup E_2$, $E_1 = \set{r S_i|~S_i \in \cs}$ and $E_2 = \set{S_i u_j|~S_i \in \cs, u_j \in S_i}$. The color set is the disjoint union $X$ of two color sets $X_c$ and $X_e$ with $\abs{X_c}=k$ and $\abs{X_e}=\Delta(G)+1-k$. Finally,
\[
\tc(x,y) = \left\{
\begin{array}{ll}
0 & \textrm{if~} x=y\\
1 & \textrm{if~} x \neq y \textrm{~and~} x,y \in X_c\\
2 & \textrm{otherwise}.
\end{array}
\right.
\]

We observe that $\rc_\chi (T,r) = \co_\chi (T,r)$ since every directed path has length at most $2$. Therefore, our reduction behaves in the same way for $\mincca$ and $\minrcpt$. For any feasible solution $(T,\chi)$ of $f_k(\cs)$, the set of parents of the vertices $U$ in $T$ is a set cover $g_k(T, \chi)$. This completes the description of the reduction.

Let $\cc^*$ be a minimum set cover of $\cs$ and $k^* = \abs{\cc^*}$. We now show that $f_{k^*}, g_{k^*}$ is an $L$-reduction.
Consider the subgraph of $G$ induced by the vertices $\set{r} \cup \cc^* \cup \set{r}$.
All the vertices of this graph have degree at most 4, except $r$ whose degree is $k^* \geq 4$.
Then this subgraph is bipartite with maximum degree $k^*$.
Therefore, we can color all the arcs of this subgraph with the $k^*$ colors of $X_c$, and the remaining arcs (all incident to $r$) with colors from $X_e$.
The cost of this solution is $n$, thus $OPT(f_{k^*}(\cs)) \leq n$.
Finally, a spanning tree $T$ is built by joining every vertex $u_i$ to an arbitrary vertex $S_j \in \cc^*$ such that $u_i \in S_j$ and each $u_i$ is a leaf of the spanning tree. We conclude that
\begin{equation}
OPT(f_{k^*}(\cs)) \leq n \leq 3 \cdot OPT(\cs) \label{eqn:LReductionOne}
\end{equation}
where the last inequality holds since every set can cover at most three elements.

Let $(T, \chi)$ be a solution of $f_{k^*}(\cs)$. We partition $g_{k^*}(T,\chi)$ into two sets $\cs_c = \{S_i \in g_{k^*}(T,\chi)|~ \chi(r S_i) \in X_c \}$ and $\cs_e= \{S_i \in g_{k^*}(T,\chi)|~ \chi(r S_i) \in X_e\}$. We observe that $\abs{\cs_c} \leq k^*$ since $\chi$ colors the inbound arcs of $\cs_c$ with distinct colors of $X_c$ (all these arcs are incident to $r$) and $\abs{X_c}=k^*$. Therefore,
\[
\abs{g_{k^*}(T,\chi)} - OPT(\cs) =  \abs{g_{k^*}(T,\chi)} - k^* = \abs{\cs_e} + \abs{\cs_c} - k^* \leq \abs{\cs_e}.
\]

We have that $\co_\chi (T,r)\geq n+\abs{\cs_e}$ since the arc leading to $u_i$ in $T$ incurs a traversal cost of at least $1$ in the parent $S_j$ of $u_i$ and for every $S_j \in \cs_e$ there is at least one traversal that costs $2$. Therefore,
\[
\abs{\cs_e} \leq \co_\chi (T,r) - n \leq \co_\chi (T,r) - OPT(f_{k^*}(\cs)).
\]
We combine the last two inequalities to get
\begin{equation}
\abs{g_{k^*}(T,\chi)} - OPT(\cs) \leq \co_\chi (T,r) - OPT(f_{k^*}(\cs)). \label{eqn:LReductionTwo}
\end{equation}
By inequalities (\ref{eqn:LReductionOne}) and (\ref{eqn:LReductionTwo}), $f_{k^*}$ and $g_{k^*}$ constitute an $L$-reduction, as required.
\qed
\end{proof}

\section{Polynomial-time Solvable Cases}\label{sec:Algorithms}
In this section we present polynomial-time algorithms for some special cases. We start with the definitions and notations used in this section. We denote the set of all proper edge colorings of a graph $G$ by $\ff_G$. Two partial functions $f,f'$ \emph{agree} if $f(x)=f'(x)$ whenever both $f$ and $f'$ are defined on $x$. We denote this fact by $f \sim f'$.

The following discussion refers to the changeover cost; however, it holds for the reload cost, too. Whenever this is not the case, the difference between the two costs will be made explicit. For a subgraph $H$ of $G$ we say that a traversal $(e_i,e_j)$ is \emph{within} $H$ if $e_i,e_j \in E(H)$, and we denote by $\traversals(\pp,H)$ the set of traversals of $\pp$ within $H$. We define $\rc_\chi(\pp,H)$ and $\co_\chi(\pp,H)$ similarly by taking into account only traversals within $H$. Let $H_1,H_2,\ldots,H_k$ be subgraphs of $G$ such that every traversal is within exactly one subgraph $H_i$. Clearly, $\co_\chi(\pp)=\sum_{i=1}^k \co_\chi(\pp,H_k)$.

\begin{figure} [htbp]
\begin{center}
\commentfig{
\includegraphics [scale=0.4]{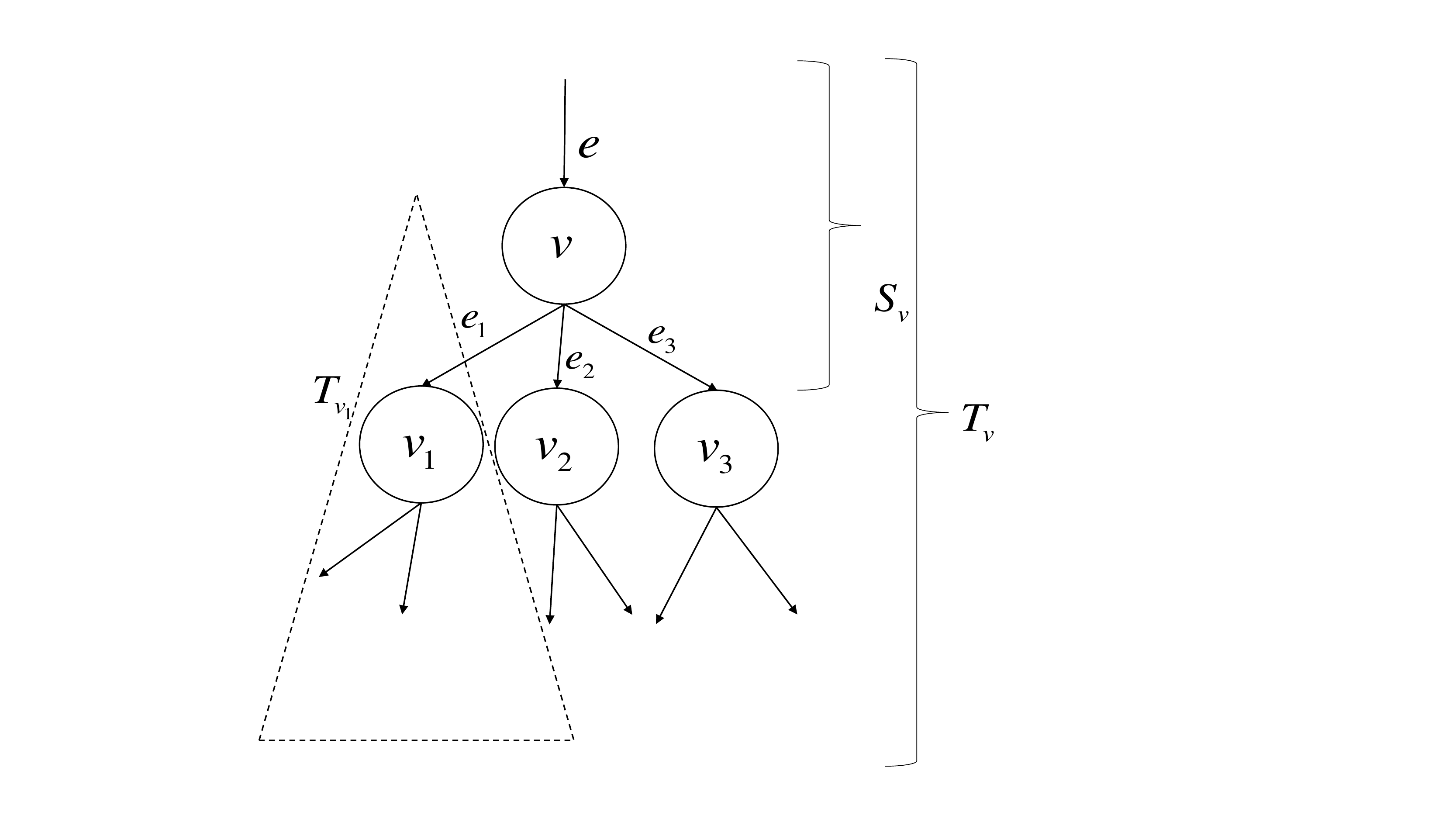}
}
\caption{Notation for Section \ref{sec:Algorithms}.}
\label{fig:treedynamicprog}
\end{center}
\end{figure}

In the sequel, we analyze the decomposition of a spanning tree $T$ of $G$ (and its cost) by the block tree of $G$.
Let $T$ be a spanning tree of $G$ rooted at some vertex $r$. For the $\minrcpt$ and $\mincca$ problems, it is convenient to choose the root as the vertex $r$ given in the instance. Consult Figure \ref{fig:treedynamicprog} for the following discussion.
For a non-root vertex $v$ of $T$, we denote by $T_v$ the subtree of $T$ rooted at $v$ with the addition of the arc $e=\inn{T}{v}$. Let $S_v$ denote the star consisting of $v$ and its incident edges. Moreover, let $\children{T}{v}=\set{v_1,\ldots,v_k}$ and $e_i = vv_i$. Clearly, every traversal within $T_v$ is either within $S_v$ or within $T_{v_i}$ for some $i \in [k]$. Therefore,
\begin{equation}
\co_\chi(\pp,T_v) = \co_\chi(\pp,S_v) + \sum_{i=1}^k \co_\chi(\pp,T_{v_i}).\label{eqn:coSubtreeDecomposition}
\end{equation}
We denote by $OPT_\co(\pp, v, x)$ the minimum changeover cost within $T_v$, among all colorings $\chi$ such that $\chi(\inn{T}{v})=x$. Formally,
\begin{equation}
OPT_\co(\pp, v, x) \defined \min \set{\co_\chi(\pp, T_v):\chi \in \ff_{T_v}, \chi(\inn{T}{v})=x}.\label{eqn:OPTccDefinition}
\end{equation}
Since $T_v$ does not contain any traversals whenever $v$ is a leaf, we have
\begin{equation}
OPT_\co(\pp, v, x) = 0 \textrm{~whenever $v$ is a leaf}.\label{eqn:DynProgInitial}
\end{equation}

In order to compute $OPT_\co(\pp, v, x)$ we categorize all proper edge colorings $\chi$ of $T_v$ by the colorings $\chi_v$ they induce on $S_v$:\\
\begin{eqnarray}
\alpha_\co(\chi_v)  \defined  \min & \set{\co_\chi(\pp, T_v): \chi \in \ff_{T_v}, \chi \sim \chi_v},\nonumber \\
OPT_\co(\pp, v, x) = & \min \set{\alpha_\co(\chi_v) : \chi_v \in \ff_{S_v}, \chi_v(e)=x}.\label{eqn:OptPpEx}
\end{eqnarray}
where $\alpha_\co(\chi_v)$ is the minimum cost  within $T_v$ among all colorings that induce the coloring $\chi_v$ on $S_v$. We define $\alpha_\rc$ similarly.
For a fixed coloring $\chi_v \in \ff_{S_v}$, we proceed as follows by first using (\ref{eqn:coSubtreeDecomposition}):
\begin{eqnarray}
\alpha_\co(\chi_v) & = & \min \set{\co_\chi(\pp,S_v) + \sum_{i=1}^k \co_\chi(\pp,T_{v_i}) : \chi \in \ff_{T_v}, \chi \sim \chi_v} \nonumber\\
& = & \co_{\chi_v}(\pp,S_v) + \sum_{i=1}^k  \min \set{ \co_\chi(\pp,T_{v_i}): \chi \in \ff_{T_v}, \chi \sim \chi_v} \nonumber\\
& = & \co_{\chi_v}(\pp,S_v) + \sum_{i=1}^k  \min \set{ \co_\chi(\pp,T_{v_i}): \chi \in \ff_{T_v}, \chi(e_i) = \chi_v(e_i)} \nonumber\\
& = & \co_{\chi_v}(\pp,S_v) + \sum_{i=1}^k  OPT_\co(\pp, v_i, \chi_v(e_i)) \nonumber \\
& = & \co_{\chi_v}(\pp,S_v) + \sum_{v' \in \children{T}{v}}  OPT_\co(\pp, v', \chi_v(vv')), \label{eqn:alpha}
\end{eqnarray}
where the equality second to last holds by (\ref{eqn:OPTccDefinition}). In particular, for $v=r$ we obtain the optimum of the instance as:
\begin{equation}
\co^*(\pp) = \min \set{\alpha_\co(\chi_r) : \chi_r \in \ff_{S_r}}.\label{eqn:DynProgFinal}
\end{equation}

Equations (\ref{eqn:DynProgInitial}) through (\ref{eqn:DynProgFinal}) imply Algorithm \ref{alg:DynProgTrees}, which is a dynamic programming algorithm for the case when $G$ is a tree.
The number of entries $OPT_\co(\pp, v, x)$ computed by the algorithm is $\abs{V(G)} \cdot \abs{X}$, i.e. polynomial in the size of the input. In order to get a polynomial-time algorithm, we have to compute every entry in polynomial time. The value $\co_{\chi_v}(\pp,S_v)$ can be computed in time $O(\abs{\pp})$ since every path has at most one traversal in $S_v$. Therefore, $\alpha_\co(\chi_v)$ can be computed in time $O(\abs{\pp}+k)=O(\abs{\pp}+\Delta(G))$. In the sequel, we analyze the computation time of $OPT_\co(\pp, v, x)$ in various cases.

\alglanguage{pseudocode}

\begin{algorithm}[H]
\caption{Dynamic Programming for $\mincc$ in Trees}\label{alg:DynProgTrees}
\begin{algorithmic}
\Require {Input $\instrc$}
\Require {$\cup \pp$ is a tree}
\State $T \gets \cup \pp$
\State Root $T$ at an arbitrary vertex $r$
\ForAll {vertex $v \in V(T) \setminus \set{r}$ in a postorder traversal of $T$}
\ForAll {$x \in X$}
\If {$v$ is a leaf of $T$}
\State $OPT_\co(\pp,v,x) \gets 0$
\Else
\State $OPT_\co(\pp,v,x) \gets $ \Call{ComputeNonLeaf}{$\pp,v,x$}
\EndIf
\EndFor
\EndFor
\State \Return $\co^*(\pp)$ using Equation (\ref{eqn:DynProgFinal})
\Statex
\Function {ComputeNonLeaf}{$\pp,v,x$}
\State $e \gets \inn{T}{v}$
\State $min \gets \infty$
\For {$\chi_v \in \ff_{S_v} \wedge \chi_v(e)=x$}
\State $\co_{\chi_v(\pp,S_v)} \gets \sum_{t \in \traversals(\pp,S_v)} \tc{\chi_v}(t)$
\State Compute $\alpha_\co(\chi_v)$ using Equation (\ref{eqn:alpha})
\If {$\alpha_\co(\chi_v) < min$}
\State $min \gets \alpha_\co(\chi_v)$
\EndIf
\EndFor
\State \Return $min$
\EndFunction
\end{algorithmic}
\end{algorithm}

\begin{theorem}
$\mincc$ and $\minrc$ are solvable in polynomial time when $G$ is a tree and $\abs{X}^{\Delta(G)}$ is polynomial in the input size.
\end{theorem}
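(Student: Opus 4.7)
\noindent\emph{Proof proposal.} The plan is to invoke Algorithm~\ref{alg:DynProgTrees} directly and bound its running time. Correctness for the changeover cost follows from Equations (\ref{eqn:coSubtreeDecomposition}) through (\ref{eqn:DynProgFinal}), which together express $\co^*(\pp)$ as a minimum over root stars of a sum that reduces, by induction on the postorder, to previously computed values $OPT_\co(\pp,v',x')$. The key structural fact underlying the recurrence is that every traversal of $\pp$ within $T_v$ lies either in the star $S_v$ or inside exactly one subtree $T_{v_i}$, so the changeover cost decomposes additively; this justifies passing from a minimization over $\chi \in \ff_{T_v}$ to a minimization over $\chi_v \in \ff_{S_v}$ followed by independent subproblems for the children.

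Next I would bound the running time. The table has $\abs{V(T)} \cdot \abs{X}$ entries. Each entry $OPT_\co(\pp,v,x)$ requires iterating over the proper edge colorings $\chi_v \in \ff_{S_v}$ with $\chi_v(\inn{T}{v}) = x$. Since $S_v$ has $d_T(v) \leq \Delta(G)$ edges, there are at most $\abs{X}^{\Delta(G)-1}$ such colorings, and by hypothesis $\abs{X}^{\Delta(G)}$ is polynomial in the input size. For each candidate $\chi_v$, the quantity $\co_{\chi_v}(\pp,S_v)$ can be computed in time $O(\abs{\pp})$ because every path contributes at most one traversal inside $S_v$, and, as already observed in the text preceding the algorithm, $\alpha_\co(\chi_v)$ is then obtained in time $O(\abs{\pp} + \Delta(G))$ from (\ref{eqn:alpha}) using the children's table entries. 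Multiplying out gives an overall running time of $O(\abs{V(T)} \cdot \abs{X}^{\Delta(G)} \cdot (\abs{\pp} + \Delta(G)))$, which is polynomial under the stated assumption.

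Finally I would extend the argument to $\minrc$. The decomposition underlying (\ref{eqn:coSubtreeDecomposition}) holds verbatim with $\co$ replaced by $\rc$, because each traversal of each path still lies in a unique $S_v$; the only difference is that a repeated traversal is counted once per occurrence, and this only affects how $\rc_{\chi_v}(\pp,S_v)$ is computed, still in $O(\abs{\pp})$ time. Consequently, defining $OPT_\rc$ and $\alpha_\rc$ analogously and replacing $\co$ by $\rc$ throughout Algorithm~\ref{alg:DynProgTrees} yields a polynomial-time algorithm for $\minrc$ as well. I do not foresee a genuine obstacle; the only point deserving care is making sure that the per-entry cost analysis does not hide an extra factor from enumerating colorings of $S_v$ when $v$ is the root (which has up to $\Delta(G)$ incident edges rather than $\Delta(G)-1$), but this is absorbed into the $\abs{X}^{\Delta(G)}$ bound assumed by the theorem. \qed
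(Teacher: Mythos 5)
Your proposal is correct and follows essentially the same route as the paper: it runs Algorithm~\ref{alg:DynProgTrees}, relies on Equations (\ref{eqn:coSubtreeDecomposition})--(\ref{eqn:DynProgFinal}) for correctness, and bounds the per-entry cost by enumerating the at most $\abs{X}^{\Delta(G)}$ proper colorings of $S_v$, each processed in $O(\abs{\pp}+\Delta(G))$ time. Your remarks on the root star and on carrying the argument over to $\minrc$ match what the paper already notes in the discussion preceding the theorem.
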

\begin{proof}
The number of proper edge colorings $\chi_v$ of $S_v$ using colors from $X$ such that $\chi_v(e)=x$ is at most $\abs{X}^{\Delta(G)}$. Therefore, the computation time of $OPT_\co(\pp, v, x)$ (in Function \textsc{ComputeNonLeaf}) is at most $O((\abs{\pp}+\Delta(G))\abs{X}^{\Delta(G)})$, i.e. polynomial in the input size.
\qed
\end{proof}
\begin{corollary}\label{coro:PolynomialCases}
$\mincc$ and $\minrc$ are solvable in polynomial time whenever $G$ is a bounded degree tree.
\end{corollary}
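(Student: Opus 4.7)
The plan is to derive the corollary directly from the preceding theorem by showing that its hypothesis is satisfied whenever $G$ is a bounded degree tree. The theorem requires two conditions: $G$ is a tree, and $\abs{X}^{\Delta(G)}$ is polynomial in the input size. The first holds by assumption, so only the second needs justification.

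First I would observe that the traversal cost function $\tc:X^2 \to \mathbb{R}^+ \cup \set{0}$ is part of the input, so its description has size at least $\abs{X}^2$. Consequently $\abs{X}$ is bounded above by a polynomial in the size of the input instance $\instrc$. Next, since $G$ is a bounded degree tree, there is a fixed constant $d$ such that $\Delta(G) \leq d$ for every instance in this family. Combining these two facts gives $\abs{X}^{\Delta(G)} \leq \abs{X}^d$, which is polynomial in the input size.

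With this inequality in hand, the hypothesis of the preceding theorem holds, and therefore Algorithm \ref{alg:DynProgTrees} runs in polynomial time on such instances, solving both $\mincc$ and $\minrc$. There is no real obstacle here; the only subtle point is confirming that $\abs{X}$ is polynomially bounded in the input length, which follows because the traversal cost function $\tc$ is tabulated on $X^2$ and thus forces $\abs{X}$ to be at most (roughly) the square root of the instance size.
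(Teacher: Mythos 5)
Your proposal is correct and matches the paper's (implicit) justification: the corollary is stated as an immediate consequence of the preceding theorem, since bounded degree gives $\Delta(G)\leq d$ for a constant $d$, and $\abs{X}$ is polynomially bounded by the input length (the color set and the cost function on $X^2$ are part of the input), so $\abs{X}^{\Delta(G)}\leq\abs{X}^{d}$ is polynomial. Your extra remark about the tabulated cost function forcing $\abs{X}$ to be at most roughly the square root of the instance size is a fine, slightly more explicit version of the same point.
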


Note that Corollary \ref{coro:PolynomialCases} complements Theorem \ref{thm:HardnessStarRhoTwo}, which proves that $\mincc$ and $\minrc$ are $\nph$ for unbounded degree stars. In the sequel, we show that $\mincc$ and $\minrc$ are polynomial-time solvable in trees and graphs having a structure close to a tree, i.e. graphs $G$ where $\abs{E(G)}-\abs{V(G)}$ is bounded by some constant.

\begin{theorem}\label{thm:MinCCECsinglesource}
$\mincc$ and $\minrc$ are solvable in polynomial time when $G$ is a tree and a particular vertex $r$ is an endpoint of every path $P \in \pp$.
\end{theorem}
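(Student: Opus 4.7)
The plan is to specialize the dynamic program of Algorithm~\ref{alg:DynProgTrees}, replacing the exhaustive enumeration over $\ff_{S_v}$ (which is exponential in $\Delta(G)$) by a polynomial-time bipartite matching computation that exploits the fact that every path in $\pp$ has $r$ as an endpoint. The first step is to root $T$ at $r$; with this choice every $P \in \pp$ becomes a root-to-descendant path $P_T(r,w)$. The key structural observation is then that, at any non-root vertex $v$ with children $v_1,\ldots,v_k$ and child-edges $e_i = vv_i$, a path $P \in \pp$ that crosses $v$ must enter via $\inn{T}{v}$ and leave via a single $e_i$. Consequently every traversal in $\traversals(\pp,S_v)$ has the form $(\inn{T}{v},e_i)$: no traversal couples two child-edges.

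Writing $n_i$ for the number of paths of $\pp$ using $e_i$ (so $n_i \geq 1$, since by assumption $E(G) = \cup_{P \in \pp} E(P)$), the previous observation gives, for any proper coloring $\chi_v$ of $S_v$,
\[
\rc_{\chi_v}(\pp,S_v) = \sum_{i=1}^{k} n_i \cdot \tc(\chi_v(\inn{T}{v}), \chi_v(e_i)), \quad \co_{\chi_v}(\pp,S_v) = \sum_{i=1}^{k} \tc(\chi_v(\inn{T}{v}), \chi_v(e_i)).
\]
Both costs decompose as an independent sum over the children of $v$. Substituting into Equation~(\ref{eqn:alpha}) and minimizing over $\chi_v$ with $\chi_v(\inn{T}{v})=x$ yields the recursion
\[
OPT_\bullet(\pp,v,x) = \min \sum_{i=1}^{k} \left( w_i \cdot \tc(x,y_i) + OPT_\bullet(\pp,v_i,y_i) \right),
\]
where $\bullet \in \set{\rc,\co}$, $w_i = n_i$ for the reload case and $w_i = 1$ for the changeover case, and the minimum is taken over assignments $i \mapsto y_i$ of pairwise distinct colors $y_1,\ldots,y_k \in X \setminus \set{x}$ to the children of $v$. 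This is exactly a minimum-weight perfect matching problem on the bipartite graph with parts $\children{T}{v}$ and $X \setminus \set{x}$, where edge $(v_i,y)$ carries weight $w_i \cdot \tc(x,y) + OPT_\bullet(\pp,v_i,y)$; since $\abs{X \setminus \set{x}} \geq \Delta(G) \geq k$, a matching saturating the children side always exists and can be computed in polynomial time. The case $v = r$ is handled analogously, without the constraint imposed by $\inn{T}{v}$ and with no contribution from $S_r$ (since $r$ is the starting vertex of every path, no traversal of $\pp$ is located at $r$).

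Installing this matching subroutine in place of the inner loop of Algorithm~\ref{alg:DynProgTrees} gives a polynomial-time algorithm: there are $O(\abs{V(T)}\cdot \abs{X})$ table entries, and each is computed by a single bipartite matching call on a graph of size polynomial in $\abs{X}$ and $\Delta(G)$. The main obstacle, and the step that warrants the most care, is the structural observation of the first paragraph: it is precisely the hypothesis that $r$ is an endpoint of every path in $\pp$ that rules out traversals between two child-edges at $v$. Without this hypothesis $\co_{\chi_v}(\pp,S_v)$ would depend jointly on pairs $\chi_v(e_i),\chi_v(e_j)$, and the matching reduction would collapse back into a general combinatorial optimization over colorings of $S_v$.
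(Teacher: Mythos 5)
Your proposal is correct and follows essentially the same route as the paper: root the tree at $r$, observe that every traversal at $S_v$ involves $\inn{T}{v}$ so the cost decomposes additively over the children, and reduce the minimization over colorings of $S_v$ to a minimum-weight bipartite matching between $\children{T}{v}$ and $X \setminus \set{x}$ with weights $\tc(x,y)$ (weighted by $\abs{\pp_{e_i}}$ in the reload case) plus the child's table entry. The only cosmetic difference is that you spell out the root case explicitly, which the paper leaves implicit.
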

\begin{proof}
We consider $G$ as rooted at $r$. We observe that since all paths have an endpoint at $r$, all traversals within $S_v$ contain the edge $e=\inn{T}{v}$. Therefore, for $\chi_v(e)=x$,
\[
\co_{\chi_v}(\pp,S_v) = \sum_{i=1}^k \tc(\chi_v(e),\chi_v(e_i)) = \sum_{i=1}^k \tc(x,\chi_v(e_i))
\]
and
\[
\rc_{\chi_v}(\pp,S_v) = \sum_{i=1}^k \tc(x,\chi_v(e_i)) \abs{\pp_{e_i}}
\]
where $\pp_{e_i}$ is the set of paths of $\pp$ that contain $e_i$. Substituting in (\ref{eqn:alpha}) we get
\[
\alpha_\co(\chi_v)= \sum_{i=1}^k  \left( \tc(x,\chi_v(e_i)) +  OPT_\co(\pp, v_i, \chi_v(e_i)) \right),
\]
and similarly
\[
\alpha_\rc(\chi_v)= \sum_{i=1}^k  \left( \tc(x,\chi_v(e_i)) \abs{\pp_{e_i}} +  OPT_\rc(\pp, v_i, \chi_v(e_i)) \right).
\]
We now observe that these values can be computed in polynomial time by Function \textsc{ComputeNonleaf} in Algorithm \ref{alg:MinPerfectMatching}, as described in the sequel. Consider the complete bipartite graph $B$ where the bipartition of the vertices is $\set{\set{v_1,\ldots,v_k},X-x}$. There is a one-to-one correspondence between the proper edge colorings $\chi_v$ of $S_v$ with $\chi_v(e)=x$ and the matchings of $B$ with size $k$. The matching $M_{\chi_v}$ corresponding to the edge coloring $\chi_v$ is such that $v_i y \in M_{\chi_v}$ if and only if $\chi_v(e_i)=y$. We assign the weight $w(v_i y)= \tc(x,y) +  OPT_\co(\pp, v_i, y)$ to the edge $v_i y$ for every $i \in [k]$ and $y \in X - x$. Under this setting, the total weight of the matching $M_{\chi_v}$ corresponding to $\chi_v$ is equal to $\alpha_\co(\chi_v)$. We conclude that minimizing $\alpha_\co(\chi_v)$ is equivalent to finding a minimum weight matching with $k$ edges on this weighted graph, i.e. $OPT_\co(\pp, v, x)$ can be computed in polynomial time.
\qed
\end{proof}

\alglanguage{pseudocode}

\begin{algorithm}[H]
\caption{\textsc{ComputeNonLeaf} by Minimum Weight Perfect Matching}\label{alg:MinPerfectMatching}
\begin{algorithmic}
\Require {$\cup \pp$ is a tree}
\Require {All paths in $\pp$ have a common endpoint}
\Function {ComputeNonLeaf}{$\pp,v,x$}
\State $e \gets \inn{T}{v}$
\State Construct a complete bipartite graph $B$ with bipartition $\set{\children{T}{v},X \setminus \set{x}}$
\For {$v' \in \children{T}{v}$}
\For {$y \in X - x$}
\State $w(v'y) \gets \tc(x,y) + OPT_\co(\pp,v',y)$
\EndFor
\EndFor
\State \Return The minimum weight of a perfect matching of $B$.
\EndFunction
\end{algorithmic}
\end{algorithm}

We note that in the special case of $\mincca$ (resp. $\minrcpt$) problem when $G$ is a tree, there is only one spanning tree; therefore, we get a special case of the $\mincc$ (resp. $\minrc$) problem when $G$ is a tree and a particular vertex $r$ of $G$ is the source vertex of all paths. Therefore,
\begin{corollary}
$\mincca$ and $\minrcpt$ are solvable in polynomial time for trees.
\end{corollary}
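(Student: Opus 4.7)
The plan is to deduce the corollary directly from Theorem~\ref{thm:MinCCECsinglesource} by two easy observations. First, when $G$ is a tree, the only spanning tree of $G$ is $G$ itself, so the $\mincca$ (resp.\ $\minrcpt$) instance contains no real choice of tree: the optimization reduces to finding a proper edge coloring $\chi$ of $T=G$ minimizing $\co_\chi(T,r)$ (resp.\ $\rc_\chi(T,r)$).

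Second, by the definition $\pp(T,r)=\set{P_T(r,v):v\in V(T)\setminus\set{r}}$, every path under consideration has $r$ as an endpoint. Thus the resulting coloring problem is exactly an instance of $\mincc$ (resp.\ $\minrc$) on a tree in which the distinguished vertex $r$ is a common endpoint of all paths in $\pp$. Paths of length $0$ or $1$ (i.e.\ to $r$ itself or to a neighbor of $r$) contribute no traversal cost, as noted in Section~\ref{sec:Preliminaries}, so dropping them yields a valid input for Theorem~\ref{thm:MinCCECsinglesource}.

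Applying Theorem~\ref{thm:MinCCECsinglesource} then produces an optimal coloring $\chi^*$ in polynomial time, and returning the pair $(T,\chi^*)$ gives an optimal solution of the original $\mincca$/$\minrcpt$ instance. I do not foresee any obstacle: the only subtlety is the bookkeeping verification that $\pp(T,r)$ satisfies the hypotheses of Theorem~\ref{thm:MinCCECsinglesource}, which is immediate.
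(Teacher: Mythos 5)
Your argument is exactly the paper's: since a tree has a unique spanning tree, the $\mincca$/$\minrcpt$ instance reduces to the special case of $\mincc$/$\minrc$ on a tree in which all paths share the endpoint $r$, which Theorem~\ref{thm:MinCCECsinglesource} solves in polynomial time. The extra remark about discarding length-$0$ and length-$1$ paths is harmless bookkeeping and does not change the approach.
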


Consider an input graph $G$ that is not a tree, but a tree can be obtained by the removal of at most $c$ edges from $G$ where $c$ is a constant. Then, one can try all of the at most $\abs{E}^c$ combinations of the edges to be removed, solve the problem for each remaining tree in turn, and return the best solution. Therefore,

\begin{corollary}
$\mincca$ and $\minrcpt$ are solvable in polynomial time for graphs $G$ where $\abs{E(G)}-\abs{V(G)}$ is bounded by some constant.
\end{corollary}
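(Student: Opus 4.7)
The plan is to enumerate every candidate spanning tree of $G$ together with every proper coloring of its at-most-$c{+}1$ non-tree edges, and for each such choice invoke the tree algorithm of Theorem~\ref{thm:MinCCECsinglesource} as a black box on a restricted-palette instance. Let $c$ be the constant with $\abs{E(G)} - \abs{V(G)} \leq c$. Since every spanning tree of $G$ has exactly $\abs{V(G)}-1$ edges, any spanning tree is $T_R \defined G \setminus R$ for some $R \subseteq E(G)$ with $\abs{R} = \abs{E(G)} - \abs{V(G)} + 1 \leq c + 1$, so the number of candidate $R$'s is $\bigoh(\abs{E(G)}^{c+1})$, polynomial for constant $c$.

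For each $R$ such that $T_R$ is connected (hence a spanning tree), I would iterate over every assignment $\psi : R \to X$ that is proper on $R$ in isolation. Because $\abs{R}$ is constant and $\abs{X}$ is polynomial in the input size (the traversal cost table alone has size $\abs{X}^2$), the number of such $\psi$ is polynomial. Since every proper edge coloring $\chi$ of $G$ decomposes uniquely into its restriction to $T_R$ and its restriction to $R$, iterating over all $(R,\psi)$ covers every admissible configuration.

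For each pair $(T_R,\psi)$, I would compute the cheapest proper extension of $\psi$ to $T_R$ by running the dynamic programming behind Theorem~\ref{thm:MinCCECsinglesource} on $T_R$ rooted at $r$, with a single modification to Algorithm~\ref{alg:MinPerfectMatching}: at every vertex $v$ of $T_R$, the palette available for tree edges incident to $v$ is restricted to $X_v \defined X \setminus \set{\psi(e) : e \in R \text{ and } v \text{ is an endpoint of } e}$. This is implemented by deleting the forbidden colors from the color axis of the bipartite matching built inside \textsc{ComputeNonLeaf}. Feasibility at $v$ is preserved because $\abs{X_v} = \abs{X} - d_R(v) \geq \Delta(G) + 1 - d_R(v) \geq d_{T_R}(v) + 1$, so the star at $v$ still admits a proper coloring from $X_v$, and the running time per invocation is unchanged up to constants.

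Finally, I would return the pair $(T_R, \chi^*_{T_R} \cup \psi)$ of minimum cost over all explored $(R,\psi)$, where $\chi^*_{T_R}$ is the coloring returned by the restricted DP. The main subtlety, and the reason enumerating $\psi$ cannot be dropped in favor of a single call to the tree algorithm on $T_R$, is that an unconstrained optimal coloring of $T_R$ may be incompatible with every proper extension to $R$; enumerating $\psi$ up front externalizes the coupling between tree and non-tree edges and turns the inner problem into a genuine restricted-palette instance of the tree problem, for which Theorem~\ref{thm:MinCCECsinglesource} applies essentially verbatim.
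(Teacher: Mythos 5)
Your proposal is correct, and its skeleton is the paper's: enumerate the $\bigoh(\abs{E(G)}^{c+1})$ edge sets $R$ whose removal leaves a spanning tree, and reduce each case to the rooted-tree dynamic program behind Theorem~\ref{thm:MinCCECsinglesource}. Where you genuinely differ is in the treatment of the non-tree edges. The paper's argument is a single sentence -- try all removals, solve each remaining tree, return the best -- and it never addresses the requirement that the output coloring $\chi$ be proper on \emph{all} of $E(G)$, including the removed edges; an optimal coloring of the tree alone need not extend to the deleted edges when $\abs{X}=\Delta(G)+1$ (for instance, a degree-$3$ vertex $u$ whose two tree edges are colored $1,2$, joined by a non-tree edge to a degree-$3$ vertex $v$ whose two tree edges are colored $3,4$, blocks all four colors on $uv$). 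Your extra layer -- enumerating the polynomially many proper assignments $\psi$ on $R$ and running the matching-based DP with restricted palettes $X_v$ -- externalizes exactly this coupling and makes the reduction airtight: soundness holds because the matchings, the palette restriction, and properness of $\psi$ on $R$ together yield a proper coloring of $G$, and completeness holds because an optimal pair $(\chi,T)$ induces an admissible $(R,\psi)$ for which the restricted DP finds a solution of no greater cost. Two small caveats. First, the restriction must be enforced at \emph{both} endpoints of a tree edge $vv_i$, e.g.\ by setting $OPT_\co(\pp,v_i,y)=\infty$ (equivalently, deleting the bipartite edge $v_i y$) for every $y\notin X_{v_i}$, not merely by trimming the color axis of the matching built at $v$, as your implementation note suggests. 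Second, your counting argument only shows that the star at $v$ is locally colorable from $X_v$; it does not guarantee that every pair $(R,\psi)$ admits a global proper extension of the whole tree -- but this is harmless, since infeasible pairs simply produce a bipartite graph with no perfect matching at some vertex and are discarded, while the pair arising from an optimal solution is feasible by construction.
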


\begin{figure} [htbp]
\begin{center}
\commentfig{
\includegraphics [scale=0.4]{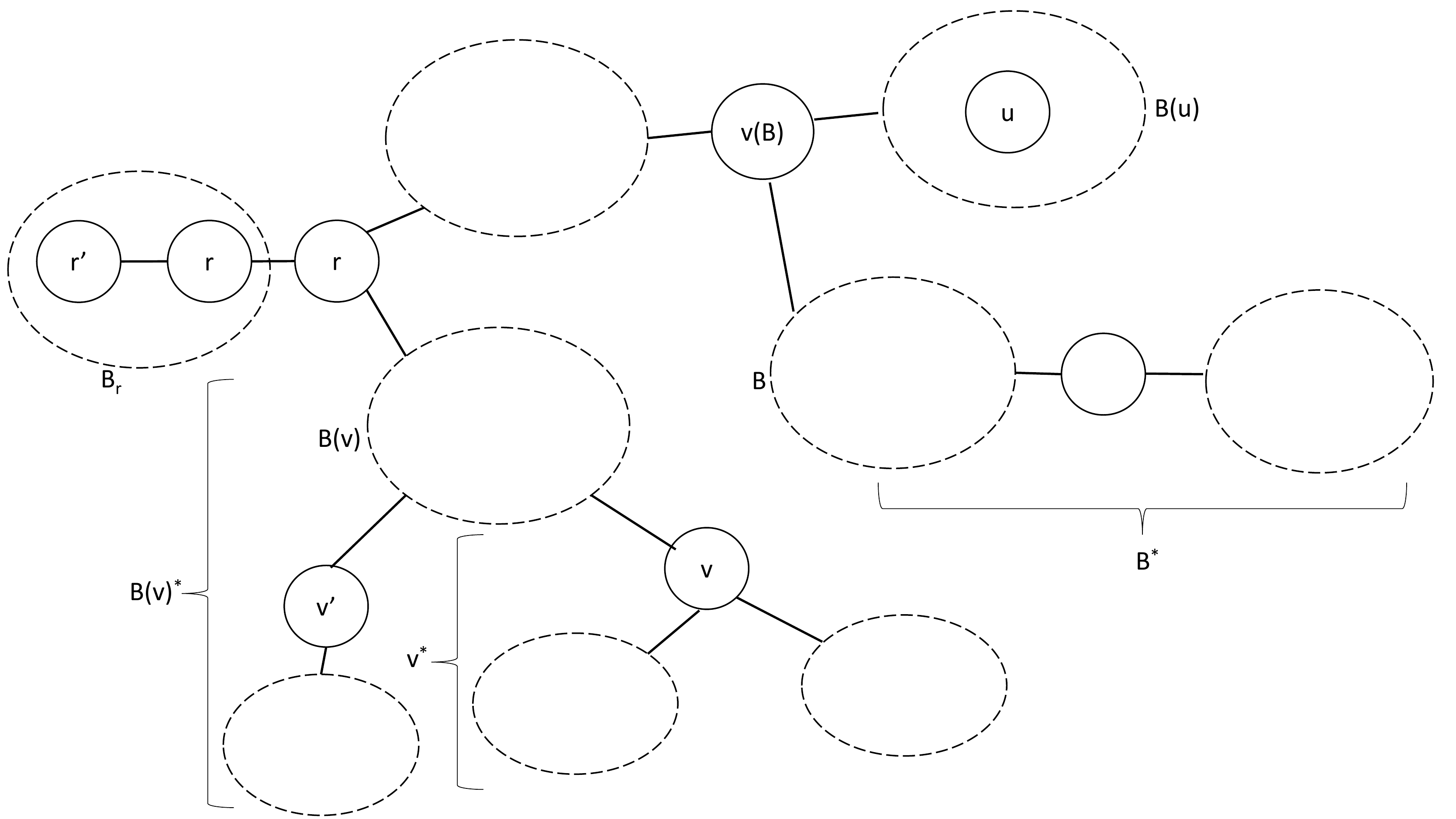}
}
\caption{Notation regarding the vertices of a block tree.}
\label{fig:BlockTreeNotation}
\end{center}
\end{figure}

The following theorem extends this idea for graphs with sparse blocks and bounded degree cut vertices.
\begin{theorem}\label{thm:MINCCADynamicProg}
$\mincca$ and $\minrcpt$ are solvable in polynomial time whenever
\begin{enumerate}[a)]
\item the degree of every cut vertex of $G$ is bounded by some constant $c_1$, and
\item $\abs{E(B)-\abs{V(B)}}$ is bounded by some constant $c_2$ for every block $B$ of $G$.
\end{enumerate}
\end{theorem}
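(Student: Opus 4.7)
The plan is to extend the tree dynamic programming of Theorem~\ref{thm:MinCCECsinglesource} (and its corollary for trees) to the block-tree structure of $G$. I first compute the block tree of $G$ in linear time and root it so that a block containing $r$ sits at the top (or at $r$ itself, if $r$ is a cut vertex). Blocks and cut vertices are then processed in a postorder traversal, so that by the time a block $B$ is handled, DP information is already available for every cut vertex that is a descendant of $B$ in the block tree.

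For each cut vertex $v$, hypothesis (a) gives $d_G(v) \leq c_1$, which lets me maintain a table $OPT(v, x, \vec{y}, F)$ indexed by: a tentative color $x \in X$ for the parent edge of $v$ in $T$; a proper coloring $\vec{y}$ of the at most $c_1$ edges at $v$ that lie in child blocks of $v$ (in the block tree); and the subset $F$ of those edges chosen to be in $T$. The entry stores the minimum combined cost over all child blocks of $v$ and their descendants, including the cross-block traversal contributions $\tc(x, \vec{y}(f))$ for $f \in F$ (weighted by the size of the subtree below $f$, for the $\minrcpt$ variant). Since $|X|$ is polynomial in the input size and $c_1$ is a constant, the table has only $|X|^{c_1+1} \cdot 2^{c_1}$ entries, which is polynomial.

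For each block $B$ with top cut vertex $v_0$ and child cut vertices $v_1, \ldots, v_m$, I enumerate (i) all spanning trees $T_B$ of $B$ by choosing which $|E(B)| - |V(B)| + 1 \leq c_2 + 1$ edges of $B$ to delete, giving at most $\binom{|E(B)|}{c_2+1}$ polynomial candidates (retaining only those whose deletion leaves a spanning tree), and (ii) all proper colorings of the deleted non-tree edges, giving at most $|X|^{c_2+1}$ further choices. For each such combination I run Algorithm~\ref{alg:DynProgTrees} on $T_B$, modified in two ways: the local coloring $\chi_u$ considered at each $u \in V(B)$ is restricted to colors not already used by the non-tree edges of $B$ incident to $u$; and when the DP reaches a child cut vertex $v_i$ of $B$, the leaf base case $0$ is replaced by a lookup into $OPT(v_i, \cdot)$, taking the minimum over tuples $(x_i, \vec{y}_i, F_i)$ consistent with the local coloring at $v_i$ inside $B$ (here $x_i$ is the color assigned by the DP to the parent edge of $v_i$ in $T_B$, which coincides with the parent edge of $v_i$ in $T$). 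The resulting optimum for block $B$ is used to populate $OPT(v_0, \cdot)$, or to report the final answer if $B$ is the root block.

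The main obstacle will be a careful accounting that assigns every traversal in $\pp(T,r)$ to exactly one DP subproblem. Each traversal falls into one of three categories: both of its edges share a vertex inside a single block's spanning tree and neither is a cross-block edge (accounted for by the tree DP on that block); both edges sit at a cut vertex $v$ but inside $v$'s parent block (also covered by that block's tree DP, since Algorithm~\ref{alg:DynProgTrees} considers all traversals within the star $S_v$ that lie inside the block); or the traversal is at a cut vertex $v$ between its parent edge in the parent block and a tree child edge in a child block of $v$ (accounted for by the $\tc(x,\vec{y}(f))$ term inside $OPT(v,\cdot)$). Once this bookkeeping is laid out, correctness reduces to that of the tree DP, and the total running time is polynomial because every enumeration --- over spanning trees per block, non-tree edge colorings per block, and cut-vertex DP tuples --- contributes only a polynomial factor under hypotheses (a) and (b).
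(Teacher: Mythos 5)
Your proposal follows essentially the same route as the paper: root the block tree, enumerate the at most $\abs{E(B)}^{c_2+1}$ candidate spanning trees of each sparse block, exploit hypothesis (a) to afford exhaustive enumeration of colorings at cut vertices, and combine the pieces bottom-up; your three-way classification of traversals (inside a block at a non-cut vertex, inside the parent block at a cut vertex, cross-block at a cut vertex) is exactly the decomposition underlying the paper's two families of values $OPT_\co(v,\hat{T},x)$ and $OPT_\co(v,x)$. The one step you must justify is the use of Algorithm~\ref{alg:DynProgTrees} inside a block: its cost at a vertex $u$ is $\abs{X}^{d}$ where $d$ is the degree of $u$ in the block's candidate tree, and hypothesis (a) bounds only cut-vertex degrees, so your closing claim that ``every enumeration contributes only a polynomial factor under hypotheses (a) and (b)'' is unsupported as written. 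It is in fact true, but for a reason you never state: a block on at least three vertices is $2$-connected, hence has minimum degree $2$, and together with $\abs{E(B)}-\abs{V(B)}\le c_2$ this forces every within-block degree to be at most $2c_2+2$; you should either add this observation or, as the paper does, compute the block-internal values by a minimum-weight perfect matching as in Theorem~\ref{thm:MinCCECsinglesource}, which is polynomial regardless of degree.

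The remaining differences are benign and arguably make your argument tighter. Your cut-vertex table $OPT(v,x,\vec{y},F)$ is richer than the paper's $OPT_\co(v,x)$: by recording the colors of all child-block edges at $v$ and which of them are tree edges, and by requiring consistency with the parent block's local coloring at lookup time, you enforce properness of the combined coloring across the block boundary explicitly, and you also color the non-tree edges of each block; the paper instead folds a guess of the whole star coloring at $v$ into $OPT_\co(v,x)$ and leaves non-tree edges uncolored. The extra indices cost only a factor of $\abs{X}^{c_1+1}2^{c_1}$ per cut vertex and $\abs{X}^{c_2+1}$ per block, so polynomiality is preserved. Two small points to spell out: the treatment of the root (the paper attaches a pendant vertex $r'$ and a free color $0$ precisely so that $r$ becomes a cut vertex and no case distinction is needed), and the fact that the parent edge of a cut vertex in any spanning tree rooted at $r$ lies in its parent block, which is what legitimizes your identification of $x_i$ with the color of $\inn{T}{v_i}$.
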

\begin{proof}
\runningtitle{Block Tree Notation}
Consult Figure \ref{fig:BlockTreeNotation} for the definitions and notation we introduce in this paragraph. For simplicity, we modify the instance as follows. We add to $G$ an additional vertex $r'$ incident only to $r$, and an additional color $0 \notin X$ that is usable only in the new edge $rr'$, and $\tc(0,i)=0$ for every $i \in X$. Moreover, we set $r'$ as the root of the modified instance. Clearly, every solution of the new instance contains $rr'$ and there exists an optimal solution in which $rr'$ is colored $0$. After this modification, $r$ is a cut vertex of $G$ and $B_r = \set{r,r'}$ is a block of $G$. We consider the block tree $\TT$ of $G$ as rooted at $B_r$. We recall that the neighbours of a cut vertex (resp. block) in $\TT$ are blocks (resp. cut vertices). For a cut vertex $v$ (resp. block $B$) of $G$, we denote by $B(v)$ (resp. $v(B)$) the parent of $v$ (resp. $B$) in $\TT$. For a non-cut vertex $u$ of $G$, we denote by $B(u)$ the unique block that contains $u$. For a block $B$ (resp. cut vertex $v$), we denote by $B^*$ (resp. $v^*$) the set of all vertices of $G$ contained in the blocks of the subtree of $\TT$ rooted at $B$ (resp. $v$). Note that $B(v) \cup v^* \subseteq B(v)^*$.

\runningtitle{Spanning Trees and the Block Tree}
For a set $U$ of vertices of $G$, we denote by $\cs(U)$ the set of subgraphs of $G$ that are trees and span $U$, i.e. the set of trees $(U,E_T)$ such that $E_T \subseteq E(G)$. Note that possibly $\cs(U)=\emptyset$. In particular, $\cs(G) \defined \cs(V(G))$ is the set of spanning trees of $G$. Let $T$ be a spanning tree of $G$ rooted at $r'$. We note that graph $T[B]$ induced by $T$ on a block $B$ of $G$ is a spanning tree of $B$ rooted at $v(B)$. For a spanning tree $T$ and a cut vertex $v$ of $G$, we denote by $T_{v^*}$ the subtree of $T$ that contains the vertices $v^*$ and the parent of $v$, i.e. $T_{v^*}=T[v^* \cup \parent{T}{v}]$. Note that $T_{v^*}$ is a subtree of $T_v$, but possibly different from $T_v$ since $v$ might have descendants in $B(v)$ and such vertices are not in $v^*$.

\runningtitle{Structure of the Dynamic Programming Algorithm}
We now present a dynamic programming algorithm for the problem (see pseudocode in Algorithm \ref{alg:SparseBlocks}). We first introduce the values to be computed by the algorithm. For every cut vertex $v$ of $G$ and a color $x \in X$, the algorithm computes $OPT_\co(v,x)$ that denotes the minimum changeover cost within $T_{v^*}$ of a spanning tree $T$ of $G$ and a coloring $\chi$ such that $\chi(\inn{T}{v})=x$, i.e.
\[
OPT_\co(v,x) \defined \min \set{\co_\chi(T_{v^*}): T \in \cs(G), \chi(\inn{T}{v})=x}.
\]
Clearly, the optimum of the instance is $OPT_\co(r,0)$.

Given a block $B$ and a spanning tree $\hat{T} \in \cs(B)$ of it, we consider $\hat{T}$ as rooted at $v(B)$. For every such block $B$, spanning tree $\hat{T}$, and every non-root vertex of $\hat{T}$ (i.e. every vertex $v \in B \setminus \set{v(B)}$) the algorithm computes the value $OPT_\co(v,\hat{T},x)$, which denotes the minimum changeover cost within $T_v$ of a spanning tree $T$ of $G$ that induces the tree $\hat{T}$ on $B(v)$ and a coloring $\chi$ such that $\chi(\inn{T}{v})=\chi(\inn{\hat{T}}{v})=x$, i.e.
\[
OPT_\co(v,\hat{T},x) \defined \min \set{\co_\chi(T_v): T \in \cs(G), T[B(v)] = \hat{T} ,\chi(\inn{\hat{T}}{v})=x}
\]
for every $v \in V(B) \setminus \set{v(B)}$, $\hat{T} \in \cs(B)$, and $x \in X \setminus \set{0}$.
A spanning tree of $B$ is obtained by removing $\abs{E(B)}-\abs{V(B)}+1 \leq c_2+1$ edges from $E(B)$. Therefore, $\abs{\cs(B)} \leq \comb{\abs{E(B)}}{c_2+1} \leq \abs{E(B)}^{c_2+1}$. Therefore, the total number of values to be computed is $\bigoh(\abs{V(G)} \cdot \abs{X} \cdot \abs{E(G)}^{c_2+1})$, which is polynomial in the input size. It remains to show a) how to compute each value in time polynomial in the input size, and b) how to compute the optimum once these values are computed.

We perform a bottom-up traversal of $\TT$ during which we compute, at a block $B$, the values $OPT_\co(v,\hat{T},x)$, and at a cut vertex $v$, the values $OPT_\co(v,x)$.

\runningtitle{Algorithm for a block}
We start with the description of the computation at a block $B$. Consult Figure \ref{fig:AVertexWithinABlock} for this description.
Let $B$ be a block of $G$, $\hat{T} \in \cs(B)$, and $T$ a spanning tree of $G$ such that $T[B]=\hat{T}$. We perform a bottom-up traversal of $\hat{T}$. At each non-root vertex $v$ of $\hat{T}$, (i.e. $v \in B - v(B)$) we proceed as follows. Let $e=\inn{T}{v}=\inn{\hat{T}}{v}$, $\children{T}{v}=\set{v_1,\ldots,v_k}$, and $e_i=vv_i$ for $i \in [k]$. We first assume, for simplicity, that $v$ is not a cut vertex. In this case, $\children{T}{v} \subseteq B - v(B)$, and we can compute $OPT_\co(v,\hat{T},x)$, in the same way that we did in Theorem~\ref{thm:MinCCECsinglesource}, namely by reducing the problem to finding a minimum weight perfect matching $\gamma(H,w)$ of the bipartite graph $H$ with bipartition $\set{X-x,\set{e_1,\ldots,e_k}}$ and weights $w(e_i y)=\tc(x,y)+OPT_\co(v_i,\hat{T},y)$. Now assume that $v$ is a cut vertex, and let $\set{v_1,\ldots,v_{k'}}$ be the set of its children in $B$ where $k' < k$. In this case, we divide $T_v$ into the subtrees $T_{v_1},\ldots,T_{v_{k'}}$ and $T_{v^*}$. We note that all these trees intersect exactly at $v$ and that $\co_\chi(T_v)$ is the sum of the costs of these trees with the addition of the traversal costs from $e$ to each $e_i$. Therefore, the optimum cost can be computed by adding $OPT_\co(v,x)$ to $\gamma(H,w)$. Summarizing,
\[
OPT_\co(v,\hat{T},x)= \gamma(H,w)+
\left\{\begin{array}{ll}
OPT_\co(v,x) & \textrm{if~}v\textrm{~is a cut vertex}\\
0 & \textrm{otherwise},
\end{array}
\right.
\]
where $\gamma(H,w)$ is the minimum weight perfect matching of the complete bipartite graph $H$ with bipartition $\set{X-x, \children{\hat{T}}{v}}$ and edge weights
\[
w(e_i y)=\tc(x,y)+OPT_\co(v_i,\hat{T},y).
\]
At this point we note that for the $\minrcpt$ problem the weights of $H$ are
$w(e_i y)=\tc(x,y) \cdot \abs{\pp_{e_i}}+OPT_\rc(v_i,\hat{T},y)$.
This computation can be clearly carried out in polynomial time as in the case of Theorem~\ref{thm:MinCCECsinglesource}.

\begin{figure} [htbp]
\begin{center}
\commentfig{
\includegraphics [scale=0.4]{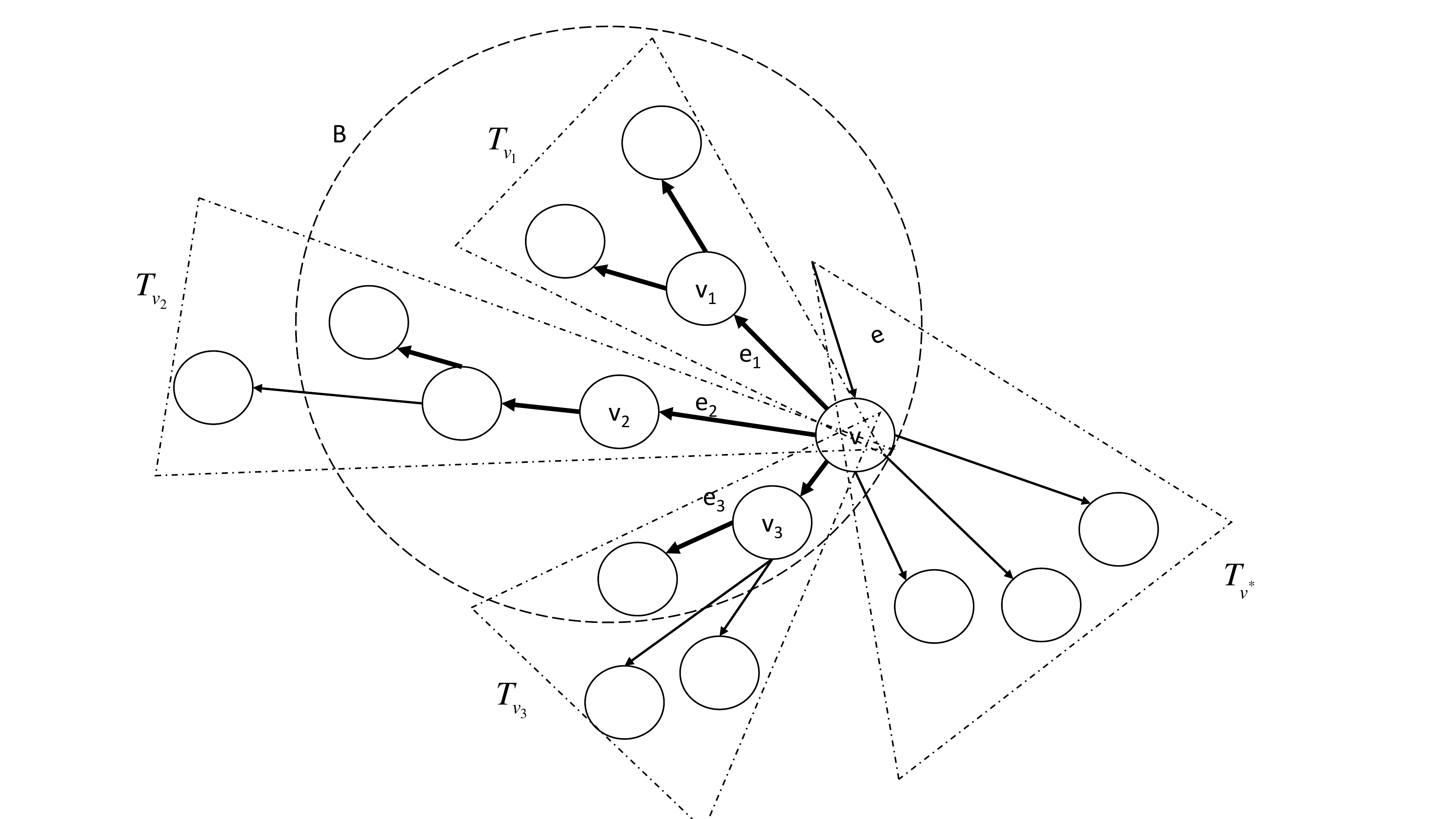}
}
\caption{The computation of the value $OPT_\co(v, \hat{T},x)$. The bold arcs depict the arcs of $\hat{T}$.}
\label{fig:AVertexWithinABlock}
\end{center}
\end{figure}

\begin{figure} [htbp]
\begin{center}
\commentfig{
\includegraphics [scale=0.4]{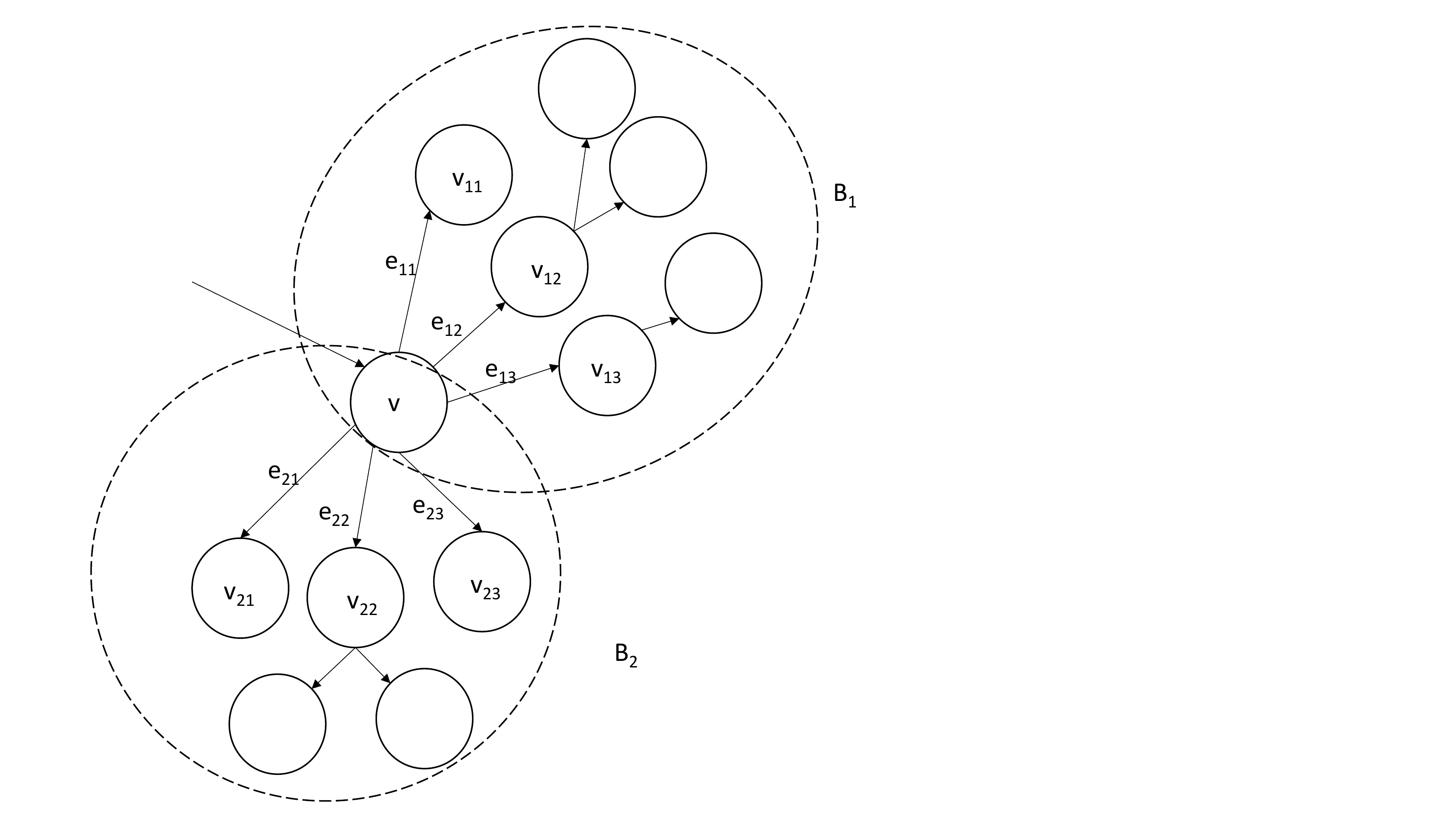}
}
\caption{The computation of the value $OPT_\co(v, x)$ for a cut vertex $v$.}
\label{fig:DynProgACutVertex}
\end{center}
\end{figure}
\end{proof}

\alglanguage{pseudocode}
\begin{algorithm}[H]
\caption{Dynamic Programming for $\mincc$ in Graphs with Sparse Blocks and Bounded Degree Cut Vertices}\label{alg:SparseBlocks}
\begin{algorithmic}
\State $V(G) \gets V(G) + r'$.
\State $E(G) \gets E(G) + rr'$.
\State $B_r \gets \set{r,r'}$.
\State $X \gets X + 0$.
\ForAll {$x \in X$}
\State $\tc(0,x) \gets 0$.
\EndFor
\State $\TT \gets $ the block tree of $G$ rooted at $B_r$
\Statex
\ForAll {vertices $U \in V(\TT) \setminus \set{B_r}$ in a postorder traversal of $\TT$}
    \If {$U$ is a block $B$ of $G$}
        \ForAll {$\hat{T} \in \cs(B)$ rooted at $v(B)$}
            \State \Call{ComputeBlock}{$B, \hat{T}$}
        \EndFor
    \Else \Comment $U$ is a cut vertex $v$ of $G$
        \ForAll {$x \in X$}
            \State Compute $OPT_\co(v,x)$ using equation (\ref{eqn:SparseBlockDynProgCutVertex})
        \EndFor
    \EndIf
\EndFor
\State \Return $OPT_\co(r,0)$.
\Statex
\Statex
\Function {ComputeBlock}{$B, \hat{T}$}
    \ForAll {vertices $v \in V(\hat{T}) \setminus \set{v(B)}$ in a postorder traversal of $\hat{T}$}
        \ForAll {$x \in X$}
            \State Construct a complete bipartite graph $H$ with
            \State ~~~~~~~~~~bipartition $\set{\children{\hat{T}}{v},X-x}$
            \For {$v' \in \children{\hat{T}}{v}$}
                \For {$y \in X - x$}
                    \State $w(v'y) \gets \tc(x,y) + OPT_\co(v',\hat{T},y)$
                \EndFor
            \EndFor
            \State $OPT_\co(v, \hat{T},x) \gets \gamma(H,w)$ \Comment The min. weight perfect matching
            \If {$v$ is a cut vertex of $G$}
                  \State $OPT_\co(v, \hat{T},x) \gets OPT_\co(v, \hat{T},x) + OPT_\co(v,x)$.
            \EndIf
        \EndFor
    \EndFor
\EndFunction
\end{algorithmic}
\end{algorithm}

\runningtitle{Algorithm for a cut vertex}
We proceed with the description of the computation of the values $OPT_\co(v,x)$ for a cut vertex $v$ of $G$, with $\children{\TT}{v}=\set{B_1,\ldots,B_k}$ (see Figure \ref{fig:DynProgACutVertex}). We perform an exhaustive search by guessing a) the coloring $\chi_v$ that an optimal solution $\chi$ induces on the edges incident to $v$, and b) the spanning trees $\hat{T}_i = T[B_i]$ for every child block $B_i$ of $v$. Note that $v$ is the root of all the trees $\hat{T}_i$. Let $\children{\hat{T}_i}{v}= \set{v_{i1},\ldots,v_{i{k_i}}}$ and $e_{ij}=vv_{ij}$. Then, recalling that $\chi(\inn{T}{v})=x$, we have
\begin{eqnarray*}
\co_\chi(T_v) & = & \sum_{i=1}^k \sum_{j=1}^{k_i} \left( \tc(x,\chi_v(e_{ij})) + \co_\chi(T_{v_{ij}}) \right)\\
& = & \sum_{i=1}^k \sum_{j=1}^{k_i} \tc(x,\chi_v(e_{ij})) + \sum_{i=1}^k \sum_{j=1}^{k_i} \co_\chi(T_{v_{ij}}).
\end{eqnarray*}
Given a guess for $\chi_v$ and $\hat{T}_i$, the first sum is fixed and the trees $T_{v_{ij}}$ are pairwise disjoint. Therefore, each term in the second summation can be minimized independently. Then, the minimum for a given guess is
\begin{eqnarray*}
& & \sum_{i=1}^k \sum_{j=1}^{k_i} \tc(x,\chi_v(e_{ij})) + \sum_{i=1}^k \sum_{j=1}^{k_i} \min \co_\chi(T_{v_{ij}})\\
& = & \sum_{i=1}^k \sum_{j=1}^{k_i} \tc(x,\chi_v(e_{ij})) + \sum_{i=1}^k \sum_{j=1}^{k_i} OPT_\co(v_{ij},\hat{T}_i,\chi_v(e_{ij})) \\
& = & \sum_{i=1}^k \sum_{j=1}^{k_i} \left( \tc(x,\chi_v(e_{ij})) + OPT_\co(v_{ij},\hat{T}_i,\chi_v(e_{ij})) \right).
\end{eqnarray*}
For a given guess of $\chi_v$, the minimum over all guesses of the subtrees $\hat{T}_i$ is
\begin{eqnarray*}
& & \sum_{B \in \children{\TT}{v}} \min_{\hat{T} \in \cs(B)} \sum_{v' \in \children{\hat{T}}{v}} \left( \tc(x,\chi_v(vv')) + OPT_\co(v',\hat{T},\chi_v(vv')) \right).
\end{eqnarray*}
Minimizing over all guesses of $\chi_v$ we get
\begin{eqnarray}
OPT_\co(v,x)=\min_{\chi_v \in \ff(S_v)} && \sum_{B \in \children{\TT}{v}} \min_{\hat{T} \in \cs(B)} \sum_{v' \in \children{\hat{T}}{v}} \nonumber\\
&& \left( \tc(x,\chi_v(vv')) + OPT_\co(v',\hat{T},\chi_v(vv')) \right),\label{eqn:SparseBlockDynProgCutVertex}
\end{eqnarray}
where $\ff(S_v)$ is the set of all colorings $\chi_v$ of the edges incident to $v$ using colors from $X$ with the exception that $\chi_r(rr')=0 \notin X$. We note that for the $\minrcpt$ problem the innermost term on the right hand side becomes
$\tc(x,\chi_v(vv')) \cdot \abs{\pp_{vv'}} + OPT_\rc(v',\hat{T},\chi_v(vv'))$.

The number of terms in the inner summation is at most the degree $d(v)$ of the cut vertex $v$, which is at most $c_1$. The number of guesses is at most $\abs{\ff(S_v)} \cdot \abs{\cs(B_i)} \leq \abs{X}^{d(v)} \cdot \abs{E(G)}^{c_2+1} \leq \abs{X}^{c_1} \cdot \abs{E(G)}^{c_2+1}$. Therefore, $OPT_\co(v,x)$ can be computed in time polynomial in the input size.
\qed


\small

\begin{thebibliography}{10}

\bibitem{wirth2001reload}
H.C. Wirth and J.~Steffan.
\newblock Reload cost problems: minimum diameter spanning tree.
\newblock {\em Discrete Applied Mathematics}, 113(1):73--85, 2001.

\bibitem{galbiati2008complexity}
G.~Galbiati.
\newblock The complexity of a minimum reload cost diameter problem.
\newblock {\em Discrete Applied Mathematics}, 156(18):3494--3497, 2008.

\bibitem{GGM14mincyclecover}
Giulia Galbiati, Stefano Gualandi, and Francesco Maffioli.
\newblock On minimum reload cost cycle cover.
\newblock {\em Discrete Applied Mathematics}, 164(1):112--120, 2014.

\bibitem{gourves2010dmtcs}
Laurent Gourv\`{e}s, Adria Lyra, Carlos Martinhon, and J{\'e}r\^{o}me Monnot.
\newblock The minimum reload s-t path, trail and walk problems.
\newblock {\em Discrete Applied Mathematics}, 158(13):1404--1417, July 2010.

\bibitem{AGM11minreloadpathstoursflows}
E.~Amaldi, Giulia Galbiati, and Francesco Maffioli.
\newblock On minimum reload cost paths, tours, and flows.
\newblock {\em Networks}, 57(3):254--260, 2011.

\bibitem{gamvros2012reload}
I.~Gamvros, L.~Gouveia, and S.~Raghavan.
\newblock Reload cost trees and network design.
\newblock {\em Networks}, 59(4):365--379, 2012.

\bibitem{GGM11minchangeover}
Giulia Galbiati, Stefano Gualandi, and Francesco Maffioli.
\newblock On minimum changeover cost arborescences.
\newblock In {\em Experimental Algorithms - 10th International Symposium (SEA),
  Kolimpari, Chania, Crete, Greece}, pages 112--123, May 2011.

\bibitem{GSVZ14}
Didem G\"{o}z\"{u}pek, Mordechai Shalom, Ariella Voloshin, and Shmuel Zaks.
\newblock On the complexity of constructing minimum changeover cost
  arborescences.
\newblock {\em Theoretical Computer Science}, 540:40--52, June 2014.

\bibitem{gozupek2013spectrum}
Didem Gozupek, Seyed Buhari, and Fatih Alagoz.
\newblock A spectrum switching delay-aware scheduling algorithm for centralized
  cognitive radio networks.
\newblock {\em IEEE Transactions on Mobile Computing}, 12(7):1270--1280, 2013.

\bibitem{GSSZ14-ChangeoverTreewidth}
Didem G\"{o}z\"{u}pek, Hadas Shachnai, Mordechai Shalom, and Shmuel Zaks.
\newblock Constructing minimum changeover cost arborescences in bounded
  treewidth graphs.
\newblock {\em Theoretical Computer Science}, 2016.
\newblock to appear.

\bibitem{bondymurty2008graphtheory}
J.A. Bondy and U.S.R. Murty.
\newblock {\em Graph theory}.
\newblock Springer, 2008.

\bibitem{E79}
Shimon Even.
\newblock {\em Graph Algorithms}.
\newblock Computer Science Press, 1979.

\bibitem{watrigant2012k}
R{\'e}mi Watrigant, Marin Bougeret, and Rodolphe Giroudeau.
\newblock The k-sparsest subgraph problem.
\newblock Technical report, RR-12019, 2012.

\bibitem{duh1997approximation}
R.~Duh and M.~F{\"u}rer.
\newblock Approximation of k-set cover by semi-local optimization.
\newblock In {\em ACM Symposium on Theory of Computing (STOC)}, pages 256--264,
  1997.

\bibitem{holyer1981np}
Ian Holyer.
\newblock The np-completeness of edge-coloring.
\newblock {\em SIAM Journal on Computing}, 10(4):718--720, 1981.

\end{thebibliography}

\end{document}